 \def\colorful{0}
 \def\nnewcolor{1}
\def \E {\mathbb{E}}
\def \eps {\epsilon}
 \newtheorem{theorem}{Theorem}[section]
 \newtheorem{lemma}[theorem]{Lemma}
 \newtheorem{corollary}[theorem]{Corollary}
 \newtheorem{remark}[theorem]{Remark}
 \newtheorem{definition}[theorem]{Definition}
\def\E{\ensuremath{\mathrm{\mathbf{E}}}}
\def\Cov[2][]{\ensuremath{\mathrm{\mathbf{Cov}}}}
 \title{Property Testing of LP-Type Problems}
\author  {Rogers Epstein
 	\thanks{Massachusetts Institute of Technology, Cambridge, MA 02139. Email:~\url{rogersep@mit.edu}.}
 	\and Sandeep Silwal
 	\thanks{Massachusetts Institute of Technology, Cambridge, MA 02139. Email:~\url{silwal@mit.edu}. Research supported by the NSF Graduate Research Fellowship under Grant No. 1122374.}}
 \def\accept{{\fontfamily{cmss}\selectfont accept}\xspace}
 \def\reject{{\fontfamily{cmss}\selectfont reject}\xspace}
\begin{document}
 	
 	\maketitle

 	\setcounter{page}{0}
 	
 	\thispagestyle{empty}

\begin{abstract}
Given query access to a set of constraints $S$, we wish to quickly check if some objective function $\varphi$ subject to these constraints is at most a given value $k$. We approach this problem using the framework of property testing where our goal is to distinguish the case $\varphi(S) \le k$ from the case that at least an $\epsilon$ fraction of the constraints in $S$ need to be removed for $\varphi(S) \le k$ to hold. We restrict our attention to the case where $(S, \varphi)$ are LP-Type problems which is a rich family of combinatorial optimization problems with an inherent geometric structure. By utilizing a simple sampling procedure which has been used previously to study these problems, we are able to create property testers for any LP-Type problem whose query complexities are independent of the number of constraints. 
To the best of our knowledge, this is the first work that connects the area of LP-Type problems and property testing in a systematic way. Among our results is a tight upper bound on the query complexity of testing clusterability with one cluster considered by Alon, Dar, Parnas, and Ron (FOCS 2000). We also supply a corresponding tight lower bound for this problem and other LP-Type problems using geometric constructions.
\end{abstract}

\section{Introduction} \label{sec:intro}
Many problems in combinatorial optimization can be represented as a pair $\varphi(S) = (S, \varphi)$ where $S$ is a set of constraints and $\varphi$ is a function of the constraints that we would like to minimize. This class includes many problems that are NP-hard, even for the decision version of some problems where we would like to determine if $\varphi(S)$ is at most some constant $k$. For instance, let $S$ be the constraints that two nodes in a graph are connected by an edge (so $S$ can be thought of as a set of edges) and $\varphi$ be the chromatic number of a graph with those edges. Then it is NP-complete to determine if $\varphi(S) \le 3$. 

In this work, we consider a relaxation of the above hard class of problems by using the framework of \emph{property testing}. Specifically, given a value parameter $k$ and distance parameter $\epsilon$, we wish to determine if $\varphi(S) \le k$ or if $(S, \varphi)$ is $\epsilon$-far from $\varphi(S) \le k$, where $\epsilon$-far means that at least $\epsilon |S|$ many of the constraints of $S$ need to be removed for $\varphi(S) \le k$ to hold. We assume we have query access to the constraints and knowledge of $\varphi$ and our goal is to perform the property testing task by looking at a `small' number of constraints. Therefore, our query complexity is measured by how many constraints of $S$ we see. 

Even under the property testing setting, this question is too broad. For example, property testing algorithms for sparse graph problems utilize very specific combinatorial properties of the problem (see \cite{GGR, graph_prop2, graph_prop1, graph_prop3, goldreich_2017}) which means that it is not possible to study all combinatorial optimization problems together. Therefore, we focus our attention to an important class called \emph{LP-Type Problems}, which are formally described in Definition \ref{def:lptype}. Informally, these problems have an underlying geometrical structure which can be used to create efficient testing algorithms. 

Our main result is a `meta' algorithm that is able to perform property testing for \emph{any} LP-Type problem, and in many cases, is able to give tight upper bounds, such as a tight upper bound for a case of testing of clusterability considered in \cite{clustering}. To the best of our knowledge, this is the first work that connects the area of LP-Type problems and property testing in a systematic way. The class of these problems is quite general and includes problems which have been previously studied individually in property testing, such as testing clusterability of points in \cite{clustering}, and newer testing problems, such as determining if a set of linear constraints is feasible or `far' from feasible. We also give a matching lower bound for many of these problems using geometric constructions. For a comprehensive overview of our contributions, see Section \ref{sec:contributions}. 

\paragraph{Related Work}
Many problems in property testing can be modeled as a set of constraints and some optimization function over these constraints. These include well studied graph problems such as bipartiteness, expansion, $k$-colorability, and many other problems \cite{GGR, graph_prop2, graph_prop1, graph_prop3}. This line of work was initiated by Goldreich and Ron and there are many results in the area of graph property testing. The query type for these problems usually are adjacency list and adjacency matrix queries where one can ask for a particular neighbor of a vertex or if an edge exists between two given vertices. For more information about graph property testing, see \cite{goldreich_2017} and the references within. Overall, these testing problems differ from our setting where our queries are essentially access to random constraints. 

This model where queries are accesses to constraints have also been studied in the case where we wish to test properties of a metric space and queries are access to points (see \cite{metricpoints, clustering, pixeltesting}). There are instances of these problems that are also examples of LP-Type problems that we consider. For more information, see Section \ref{sec:contributions}.

LP-Type problems have a rich literature and there have been many previous work on them, including general algorithms to solve LP-Type problems \cite{Seidel1991,Clarkson:1995:LVA:201019.201036, lptype1, lptype2}. The algorithms for these problems have runtimes that are generally linear in the number of constraints, but exponential in the dimension of the LP-Type problem (see Definition \ref{def:dim}). This is in contrast to our testing algorithms that have no dependence on the number of constraints. 

Furthermore, many properties of LP-Type problems have been generalized to a larger class of problems called violator spaces \cite{violator1, violator2}. We do not explicitly consider them here since these problems do not yield any additional interesting property testing applications but our results carry over to this setting in a straightforward manner.

Lastly, there is existing work on property testing for constraint satisfaction problems (CSPs) where given an instance of a CSP, one is given query access to an assignment of the variables and the task is to determine if the assignment is `close' or `far' from satisfying the instance \cite{csp1}. This is different than our setting where we wish to check if $\varphi(S) \le k$ where $\varphi$ is a function of the constraints in $S$.

\paragraph{Organization}

In Section \ref{sec:prelim}, we formally define the class of LP-Type problems. In Section \ref{sec:contributions} we outline our contributions. In Section \ref{sec:alg}, we present our algorithms and prove their correctness and in Section \ref{sec:applications}, we apply our algorithm to specific LP-Type problems. Finally, we complement some of these problems with lower bounds in Section \ref{sec:lowerbounds}.

\section{Preliminaries} \label{sec:prelim}
\subsection{Notation and Definitions}
We formally define LP-Type problems as well as some related concepts. These definitions are standard in the literature for LP-Type problems but we reproduce them below for the sake of completeness. For more information, see \cite{lptype1, lptype2, lptypesurvey}.
\begin{definition}[LP-Type Problem]\label{def:lptype}
Let $S$ be a finite set and $\varphi$ be a function that maps subsets of $S$ to some value. We say $(S, \varphi)$ is a LP-Type problem if $\varphi$ satisfies the following two properties:
\begin{itemize}
    \item Monotonicity: if $A \subseteq B \subseteq S$ then $\varphi(A) \le \varphi(B)$
    \item Locality: For all $A \subseteq B \subseteq S$ and elements $x \in S$, if $\varphi(A) = \varphi(B) = \varphi(A \cup \{x\})$, then $\varphi(A) = \varphi(B \cup \{x\})$.
\end{itemize} 
\end{definition}

LP-Type problems have a natural definition of `dimension' which influences the runtime of many algorithms for LP-Type problems as well as our algorithm for property testing of LP-Type problems. First, we must define the notion of a basis.

\begin{definition}[Basis of LP-Type problems]\label{def:basis}
Given an LP-Type problem, a \emph{basis} $B \subseteq S$ is a set such that for all proper subsets $B' \subset B$, we have $\varphi(B') < \varphi(B)$.
\end{definition}

Given the above definitions, we can now define the dimension of a LP-Type problem.

\begin{definition}[Dimension of LP-Type problem]\label{def:dim}
The \emph{dimension} $\delta$ of an LP-Type problem is the largest possible size of a basis $B \subseteq S$. This is sometimes also called the \emph{combinatorial dimension}.
\end{definition}

An example of a LP-Type problem is when $S$ is a set of linear inequalities in $d$ dimensions and $\varphi$ is a linear objective function. In this case, the dimension of this LP-Type problem corresponds to the usual notion of dimension and is equal to $d$ \cite{lptype1, lptype2, Clarkson:1995:LVA:201019.201036}. There are many examples of well-studied LP-Type problems, and in many of these cases explicit bounds, if not exact values, are known regarding their dimensions. For more information, see our contributions in Section \ref{sec:contributions}. In general, the dimension of the problem tends to grow with the `difficulty' of solving it and for property testing, our query complexity bound is also a function of the dimension. We now formally define property testing for LP-Type problems.

\begin{definition}[Property Testing of LP-Type problems]\label{def:prop_testing}
Given an LP-Type problem $(S, \varphi)$, a parameter $k$, a distance parameter $\epsilon$, and query access to the constraints in $S$, we wish to distinguish the following two cases:
\begin{itemize}
    \item  Output \accept with probability at least $2/3$ if $\varphi(S) \le k$ (Completeness Case)
    \item Output \reject with probability at least $2/3$ if at least $\epsilon |S|$ constraints need to be removed from $S$ for $\varphi(S) \le k$ to hold (Soundness Case).
\end{itemize}
\end{definition}
\begin{remark}
 We say that $S$ is $\epsilon$-far if it falls in the soundness case.
 \end{remark}
\subsection{Our Contributions} \label{sec:contributions}
The main contribution of this paper is a comprehensive algorithm for property testing of LP-Type problems with query complexity $O(\delta/\epsilon)$ where $\delta$ is the dimension of the LP-Type problem. Note that this bound is independent of the number of constraints which is $|S|$. Our algorithm is simple and proceeds by first sampling a small set of random constraints in $S$, constructing a partial solution, and `testing' this partial solution against few other randomly chosen constraints. The analysis that we reject in the $\epsilon$-far case (soundness) is straightforward. However, the main technical challenge lies in showing that our algorithm accepts in the completeness case. To do so, we use a `sampling' lemma which roughly says that for a randomly chosen subset $R$ of $S$ of a particular size (depending on the dimension $\delta$) and $x$ a randomly chosen element of $S \setminus R$, we have $\varphi(R) = \varphi(R \cup \{x\})$. Using this result, we show that we are likely to accept in the completeness case. For the full detailed analysis, see Section \ref{sec:alg}.

We highlight the power of our approach by considering the query complexity bounds that we get for a few selected problems. In many cases, we are also able to get matching lower bounds. More specifically, we obtain the following results as an application of our framework:
\begin{enumerate}
    \item We consider the problem of determining if a set of linear inequalities in $d$ variables is feasible (there exists a satisfying assignment) or if at least $\epsilon$-fraction of the constraints need to be removed for the set of constraints to be feasible. While this problem does not exactly fall under the LP-Type definition (since there is no optimization function), we modify the `meta' algorithm slightly to given an algorithm with query complexity is $O(d/\epsilon)$. 
    
    \item We study the problem of determining if a set of points in $d$ dimensions labeled $\{ +1, -1\}$ is linearly separable or if at least $\epsilon$-fraction of the points need to be relabeled or removed for the points to be linearly separable. Using result $1$ above, we directly get a query complexity bound of $O(d/\epsilon)$. We also give a matching lower bound for this problem which implies a lower bound for result $1$.

    \item We obtain a result for property testing of many classical LP-Type problems. In particular, we consider the following problems:
    \begin{itemize}
        \item \emph{Smallest enclosing ball}: Accept if a set of points in $\mathbb{R}^d$ can be covered by a ball of radius $r$ and reject if at least $\epsilon$-fraction of the points need to be removed to be able to be covered by a ball of radius $r$. This problem has been previously studied in \cite{clustering}. We beat the upper bound obtained in this paper by getting a tight query complexity of $O(d/\epsilon)$ queries (also see point $4$ below).
        
        \item \emph{Smallest intersecting ball}: Accept if a set of closed convex bodies in $\mathbb{R}^d$ can all be intersected by a ball of radius $r$ and reject if at least $\epsilon$-fraction of the convex bodies need to be removed to be able to be intersected by a ball of radius $r$. 
        
        \item \emph{Smallest volume annulus}: Accept if a set of points in $\mathbb{R}^d$ can be enclosed in an annulus of volume $V$ and reject if at least $\epsilon$-fraction of the points need to be removed to be encloseable by an annulus of volume $V$. 
    \end{itemize}
    In all these cases, it is known that the dimension of the LP-Type problem is linearly related to the dimension of the points in $S$, so we get an upper bound of $O(d/\epsilon)$ queries.
    \item We get a matching lower bound of $O(d/\epsilon)$ queries for the smallest enclosing ball problem and the smallest intersecting ball problem. This provides a lower bound for the radius cost of clustering considered by Alon et al.\@ in \cite{clustering} in the case of $1$ cluster.
    \end{enumerate}
\begin{remark}
Note that there are also many examples of LP-Type problems where the constraints in $S$ describe points in dimension $d$ but the dimension of the LP-Type problem is not a linear function of $d$. For example, if $\varphi(S)$ is the smallest ellipsoid that encloses the set of points in $S$ which are in $\mathbb{R}^d$, then $(S,\varphi)$ has dimension $O(d^2)$ as a LP-Type problem \cite{lptypesurvey}. We did not explicitly highlight these problems but our approach also gives an upper bound on the query complexity for the property testing versions of these problems.
\end{remark}

\section{Meta Algorithm for Property Testing of LP-Type problems} \label{sec:alg}

We now present our `meta' algorithm, \textsc{LP-Type Tester}, for property testing of LP-Type problems as defined in Definition \ref{def:prop_testing}. Given a LP-Type problem $(S, \varphi)$, Our algorithm first samples a subset $R$ of  $O(\delta/\epsilon)$ constraints from $S$ where $\delta$ is the dimension of the LP-Type problem. It then calculates the value of $\varphi$ on the sampled subset. After this step, an additional $O(1/\epsilon)$ constraints are sampled randomly from $S$. If $\varphi(R \cup \{x\})$ differs from $\varphi(R)$ where $x$ is any of the additional random constraints, then our algorithm outputs \reject. Otherwise, the algorithm outputs \accept. We present our approach in Algorithm \ref{alg:lptype} along with our main theorem, Theorem \ref{thm:main} which proves the correctness of Algorithm \ref{alg:lptype}.

\begin{algorithm}[!ht]
	\SetKwInOut{Input}{Input}
	\SetKwInOut{Output}{Output}
	\Input{$\delta, \epsilon, k$, query access to constraints in $S$}
	\Output{\accept or \reject}
	\DontPrintSemicolon
	$r \gets \lceil 10\delta/\epsilon \rceil$ \; 
	$R \gets$ random sample of size $r$ of constraints from $S$. \;
	\If{$\varphi(R) > k$}{Output \reject and abort.}
	\For{$2/\epsilon$ rounds}{
	$x \gets$ uniformly random constraint of $S\setminus R$ \;
	\If{$\varphi(R \cup \{x\}) \ne \varphi(R)$}{
	Output \reject and abort.
	}
	}
	Output \accept.
	\caption{\textsc{LP-Type Tester}}
	\label{alg:lptype}
\end{algorithm}

\begin{theorem}[Correctness of \textsc{LP-Type Tester}]\label{thm:main}
Given an LP-Type problem $(S,\varphi)$ of dimension $\delta$ and parameters $k$ and $\epsilon$, the following statements hold with probability at least $2/3$:
\begin{itemize}
    \item \textbf{Completeness case:} \textsc{LP-Type Tester} outputs \accept $\varphi(S) \le k$.
    \item \textbf{Soundness Case:} \textsc{LP-Type Tester} outputs 
    \reject if at least $\epsilon |S|$ constraints need to be removed from $S$ for $\varphi(S) \le k$ to hold.
    \end{itemize}
\end{theorem}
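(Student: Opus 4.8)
The plan is to treat the two cases separately; as the paper indicates, soundness is the easy direction and completeness is where the work (a sampling lemma for LP-Type problems) goes. Throughout write $n=|S|$ and $r=\lceil 10\delta/\epsilon\rceil$, and call $x\in S$ a \emph{violator} of a set $R$ if $\varphi(R\cup\{x\})\neq\varphi(R)$. It is convenient to dispose of the degenerate regime $r\ge n$ first: there the tester may as well read all of $S$ and evaluate $\varphi(S)$ exactly, in which case both cases are immediate (in the soundness case $\epsilon$-farness forces $\varphi(S)>k$, since at least one constraint must be removed). So assume $r<n$, whence $S\setminus R\neq\emptyset$.

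\emph{Soundness.} Suppose $S$ is $\epsilon$-far and condition on the sampled set $R$. If $\varphi(R)>k$ the tester rejects at the first check, so assume $\varphi(R)\le k$. Let $T=\{x\in S:\varphi(R\cup\{x\})=\varphi(R)\}$ be the set of non-violators; note $R\subseteq T$. The first key step is to show $\varphi(T)=\varphi(R)\le k$ by adding the elements of $T\setminus R$ to $R$ one at a time and applying Locality at each step, with $A=R$, $B$ the current set, and $x$ the new element, all three relevant values being equal to $\varphi(R)$ by the inductive hypothesis. Since $T$ is obtained from $S$ by deleting only the $|S\setminus T|$ violators and satisfies $\varphi(T)\le k$, $\epsilon$-farness forces $|S\setminus T|\ge\epsilon n$. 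As $S\setminus T\subseteq S\setminus R$, a uniformly random element of $S\setminus R$ is a violator of $R$ with probability $|S\setminus T|/|S\setminus R|\ge \epsilon n/n=\epsilon$, so over the $2/\epsilon$ independent rounds the tester misses every violator with probability at most $(1-\epsilon)^{2/\epsilon}\le e^{-2}<1/3$. Combining with the case $\varphi(R)>k$, the tester rejects with probability at least $2/3$.

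\emph{Completeness.} Suppose $\varphi(S)\le k$. By Monotonicity $\varphi(R)\le\varphi(S)\le k$, so the first check never fires, and it remains to bound the probability that one of the $2/\epsilon$ sampled elements is a violator of $R$. Here I invoke the sampling lemma (in its quantitative form): for a uniformly random $r$-subset $R$ of $S$, the expected number of violators is at most $\frac{n-r}{r+1}\,\delta$, which by $r\ge 10\delta/\epsilon$ is at most $\frac{\epsilon}{10}(n-r)$. Conditioning on $R$, each round picks a violator with probability (number of violators of $R$)$/(n-r)$, so a union bound over the $2/\epsilon$ rounds followed by taking expectation over $R$ gives
\[
\Pr[\text{reject}]\ \le\ \frac{2}{\epsilon}\cdot\frac{1}{n-r}\,\E\big[\#\{\text{violators of }R\}\big]\ \le\ \frac{2}{\epsilon}\cdot\frac{1}{n-r}\cdot\frac{\epsilon}{10}(n-r)\ =\ \frac{1}{5},
\]
so the tester accepts with probability at least $4/5\ge 2/3$.

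\emph{Main obstacle.} Soundness is elementary once the iterated-Locality identity $\varphi(T)=\varphi(R)$ is established; the crux is completeness, and specifically deploying the correct quantitative form of the sampling lemma — a bound on the \emph{expected} number of violators of a random $r$-subset in terms of the combinatorial dimension $\delta$ — and then tuning the sample size to $r=\Theta(\delta/\epsilon)$ so that the expected violator fraction is small enough that $2/\epsilon$ test rounds all pass with constant probability. Minor care is also needed for the degenerate case $r\ge n$ (equivalently, $S\setminus R$ being too small to sample from), which we handle by exact computation as noted above.
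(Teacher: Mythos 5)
Your proof is correct and follows essentially the same route as the paper's: soundness via the iterated-Locality observation that $R$ together with its non-violators still has $\varphi$-value $\varphi(R)\le k$ (so $\epsilon$-farness forces $\ge\epsilon n$ violators), and completeness via the sampling-lemma bound $\E[|V(R)|]\le\delta(n-r)/(r+1)$ at $r=\Theta(\delta/\epsilon)$. Your final completeness estimate uses a union bound plus linearity of expectation in place of the paper's $(1-\epsilon/10)^{2/\epsilon}$ computation; both are valid, and yours is a touch more explicit about how the expectation over the shared random $R$ interacts with the $2/\epsilon$ test rounds.
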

\begin{remark}
Note that the query complexity of Algorithm \ref{alg:lptype} is $O(\delta/\epsilon)$ which is independent of $|S|$, the number of constraints. Furthermore, the runtime is polynomial in $\delta/\epsilon$ since we are only solving $O(1/\epsilon)$ linear programs in $\delta$ variables and $O(\delta/\epsilon)$ constraints.
\end{remark}

\paragraph{Overview of the proof}
To prove the correctness of \textsc{LP-Type Tester}, we analyze the completeness case and the soundness case separately. For the soundness case, we show that with sufficiently large probability, either $\varphi(R) > k$ or \textsc{LP-Type Tester} outputs \reject during the second sampling phase where we sample an additional $O(1/\epsilon)$ constraints. To show this, we use the locality property of LP-Type problems (see Definition \ref{def:lptype}) to show that there must be `many' $x$ such that $\varphi(R \cup \{x\}) \ne \varphi(R)$. To analyze the completeness case, we use the \emph{Sampling Lemma}, Lemma \ref{lem:sampling}, to show that there are `few' $x$ such that $\varphi(R \cup \{x\}) \ne \varphi(R)$ so that Algorithm \ref{alg:lptype} outputs \accept with sufficiently large probability.

Before we present the proof of Theorem \ref{thm:main}, we present the Sampling Lemma as described above. This lemma has previously been used to study LP-Type problems. For completeness, we present a proof. For more information, see \cite{lptype1, lptype2, violator1, samplinglem1}.  Before we present the lemma, we introduce two new definitions.

\begin{definition}[Violators and Extreme Elements] For a subset $R \subseteq S$, define the violators and extreme elements of $R$ as the following:
\begin{itemize}
    \item Define the \textbf{violators} of $R$ as the set $V(R) = \{s \in S \backslash R \mid \varphi(R \cup \{s\}) \ne \varphi(R)\}$.
    \item Define the \textbf{extreme elements} of $R$ as the set  $X(R) = \{s \in R \mid \varphi(R) \neq \varphi(R \backslash \{s\}\}$.
\end{itemize}
\end{definition}
\begin{remark}
Note that $s$ is a violator of $R$ if and only if $s$ is an extreme element of in $R \cup \{s\}$.
\end{remark}

We now present the Sampling Lemma.

\begin{lemma}[Sampling Lemma] \label{lem:sampling}
 Let $v_r = \E[|V(R)|]$ and $x_r = \E[|X(R)|]$ where both expectations are taken over the random subsets $R$ of $S$ which have size $r$. Suppose $|S| = n$. Then for $0 \le r \le n$,  we have
$$ \frac{v_r}{n-r} = \frac{x_{r+1}}{r + 1}.$$
\end{lemma}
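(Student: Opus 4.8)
The plan is to prove the identity by a double-counting / linearity-of-expectation argument over ordered pairs $(R', s)$ where $R'$ is an $(r+1)$-subset of $S$ and $s$ is a distinguished element. The key observation, already recorded in the remark preceding the lemma, is that $s$ is a violator of $R$ if and only if $s$ is an extreme element of $R \cup \{s\}$. This lets me relate the two quantities $v_r$ and $x_{r+1}$ by counting incidences of the form ``$s$ is an extreme element of the $(r+1)$-set $R'$'' in two different ways.

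First I would write $v_r = \E[|V(R)|]$ as a sum over all $r$-subsets $R$ and all $s \in S \setminus R$ of the indicator that $\varphi(R \cup \{s\}) \ne \varphi(R)$, normalized by $\binom{n}{r}$. Concretely,
\[
v_r = \frac{1}{\binom{n}{r}} \sum_{\substack{R \subseteq S \\ |R| = r}} \sum_{s \in S \setminus R} \mathbf{1}[s \in V(R)].
\]
Similarly,
\[
x_{r+1} = \frac{1}{\binom{n}{r+1}} \sum_{\substack{R' \subseteq S \\ |R'| = r+1}} \sum_{s \in R'} \mathbf{1}[s \in X(R')].
\]
Now I substitute $R' = R \cup \{s\}$ in the first double sum: pairs $(R, s)$ with $|R| = r$ and $s \notin R$ are in bijection with pairs $(R', s)$ with $|R'| = r+1$ and $s \in R'$. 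Under this bijection, $\mathbf{1}[s \in V(R)] = \mathbf{1}[s \in X(R')]$ by the remark. Hence the two unnormalized double sums are literally equal:
\[
\sum_{\substack{R,\, |R|=r \\ s \notin R}} \mathbf{1}[s \in V(R)] = \sum_{\substack{R',\, |R'|=r+1 \\ s \in R'}} \mathbf{1}[s \in X(R')].
\]

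The remaining step is purely arithmetic: divide both sides by the appropriate normalizing constant and compare. The left side equals $\binom{n}{r} v_r$ and the right side equals $\binom{n}{r+1} x_{r+1}$, so $\binom{n}{r} v_r = \binom{n}{r+1} x_{r+1}$. Using $\binom{n}{r+1} = \binom{n}{r} \cdot \frac{n-r}{r+1}$ and dividing through by $\binom{n}{r}$ gives $v_r = \frac{n-r}{r+1} x_{r+1}$, which rearranges to the claimed $\frac{v_r}{n-r} = \frac{x_{r+1}}{r+1}$. I should note the edge cases $r = n$ (where $n - r = 0$ and $v_n = 0$ since there are no elements outside $R = S$, so both sides are $0$) so the statement is interpreted correctly there; for $0 \le r < n$ the division is legitimate.

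I do not anticipate a serious obstacle here — the entire content is the bijection between $(R,s)$ pairs and $(r+1)$-subsets with a marked element, combined with the violator/extreme-element equivalence. The one point requiring a little care is making sure the equivalence ``$s \in V(R) \iff s \in X(R \cup \{s\})$'' is actually correct and does not need monotonicity or locality: by definition $s \in V(R)$ means $\varphi(R \cup \{s\}) \ne \varphi(R)$, and $s \in X(R \cup \{s\})$ means $\varphi(R \cup \{s\}) \ne \varphi((R \cup \{s\}) \setminus \{s\}) = \varphi(R)$, so these are the same condition verbatim. Thus the proof needs no structural properties of $\varphi$ at all beyond it being a well-defined set function, and the counting goes through cleanly.
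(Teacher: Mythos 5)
Your proof is correct and takes essentially the same double-counting route as the paper: both argue $\binom{n}{r}v_r = \binom{n}{r+1}x_{r+1}$ via the bijection $(R,s)\mapsto R\cup\{s\}$ together with the equivalence $s\in V(R)\iff s\in X(R\cup\{s\})$, then simplify the binomial ratio. Your explicit check that the violator/extreme equivalence holds by definition alone (with no appeal to monotonicity or locality) and your note on the $r=n$ edge case are small additions but do not change the argument.
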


\begin{proof}
Let $\mathbf{1}\{ \cdot \}$ denote an indicator variable. Note that
\begin{align*}
    \binom{n}{r}v_r &= \sum_{R \in \binom{S}{r}} \sum_{s \in S \backslash R} \mathbf{1}\{s \text{ is a violator of }R\} =\sum_{R \in \binom{S}{r}} \sum_{s \in S \backslash R} \mathbf{1}\{s \text{ is extreme for  } R \cup \{s\}\} \\
    &= \sum_{Q \in \binom{S}{r+1}} \sum_{s \in Q} \mathbf{1}\{s \text{ is extreme for } Q\} = \binom{n}{r+1}x_{r+1}.
\end{align*}
Since 
$$\frac{\binom{n}{r+1}}{\binom{n}{r}} = \frac{r! (n-r)!}{(r+1)!(n-r-1)!} = \frac{n-r}{r+1},$$ the proof is complete.
\end{proof}
\begin{remark}
Note that $(S, \varphi)$ does not need to be a LP-Type problem for the Sampling Lemma to hold true.
\end{remark}

If $(S, \varphi)$ is a LP-Type problem, there is a direct relationship between the expected number of violators and the dimension of $(S, \varphi)$ as defined in \ref{def:dim}. The following corollary also appears in many forms in literature (for instance \cite{lptype1, lptype2, violator1, violator2}) but we present its proof for completeness.

\begin{corollary}\label{cor:viol_bound}
Let $(S, \varphi)$ be a LP-Type problem of dimension $\delta$ and let $|S| = n$. If $R \subseteq S$ is subset of size $r$ chosen uniformly at random, then $v_r = \E[|V(R)|]$ satisfies
$$v_r \le \frac{\delta(n-r)}{r+1}.$$
\end{corollary}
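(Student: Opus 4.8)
The plan is to derive the bound directly from the Sampling Lemma (Lemma~\ref{lem:sampling}) together with a uniform bound on the number of extreme elements of an arbitrary subset. Rearranging Lemma~\ref{lem:sampling} gives $v_r = \tfrac{n-r}{r+1}\, x_{r+1}$, so it suffices to show $x_{r+1} \le \delta$. In fact I would prove the stronger pointwise statement $|X(Q)| \le \delta$ for every fixed $Q \subseteq S$ with $|Q| = r+1$, which immediately yields the same bound on the expectation $x_{r+1} = \E[|X(R)|]$.

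To bound $|X(Q)|$, I would introduce a \emph{minimal spanning subset} $B \subseteq Q$, i.e.\ a minimal (under inclusion) set with $\varphi(B) = \varphi(Q)$; such a $B$ exists because $S$ is finite. By minimality together with monotonicity, every proper subset $B' \subsetneq B$ satisfies $\varphi(B') < \varphi(B)$, so $B$ is a basis in the sense of Definition~\ref{def:basis}, and hence $|B| \le \delta$ by Definition~\ref{def:dim}. The key step is then the containment $X(Q) \subseteq B$: if some extreme element $s \in X(Q)$ were not in $B$, then $B \subseteq Q \setminus \{s\} \subseteq Q$, so monotonicity would force $\varphi(Q) = \varphi(B) \le \varphi(Q \setminus \{s\}) \le \varphi(Q)$, i.e.\ $\varphi(Q \setminus \{s\}) = \varphi(Q)$, contradicting $s \in X(Q)$. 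Therefore $|X(Q)| \le |B| \le \delta$.

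Combining these, $x_{r+1} \le \delta$, and so $v_r = \tfrac{n-r}{r+1}\, x_{r+1} \le \tfrac{\delta(n-r)}{r+1}$, which is the claimed inequality. The only mildly delicate point is the first part of the second paragraph — correctly identifying a minimal spanning subset as a basis so that Definition~\ref{def:dim} applies; once that is done, the containment $X(Q) \subseteq B$ follows from monotonicity alone, and the locality axiom is not needed. Degenerate cases such as $r = n$ are trivial since both sides of the inequality vanish.
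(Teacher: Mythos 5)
Your proposal is correct, and it follows the same high-level route as the paper: invert the Sampling Lemma to reduce the claim to $x_{r+1}\le\delta$, prove this by showing the stronger pointwise bound $|X(Q)|\le\delta$ for every $Q$, introduce an inclusion-minimal subset $B\subseteq Q$ with $\varphi(B)=\varphi(Q)$, argue that $B$ is a basis, and observe $X(Q)\subseteq B$.

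Where you genuinely diverge — and improve — is in how you certify that $B$ is a basis. The paper takes a detour: it first proves $V(R')=V(R)$ using locality, then argues by contradiction that $R'$ is a basis via a second comparison of violator sets $V(F)=V(R')$, again invoking locality. You instead note that inclusion-minimality of $B$ together with monotonicity \emph{already} forces $\varphi(B')<\varphi(B)$ for every proper $B'\subsetneq B$, which is exactly Definition~\ref{def:basis}. Your containment argument $X(Q)\subseteq B$ likewise uses only monotonicity. The upshot, which you flag explicitly, is that this corollary does not need the locality axiom at all — it holds for any $(S,\varphi)$ satisfying monotonicity. That is a mild but real strengthening; locality is genuinely used elsewhere in the paper (e.g.\ the soundness direction of Theorem~\ref{thm:main}), but not here. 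Your proof is both shorter and establishes a slightly more general statement.
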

\begin{proof}
We show that for any set $R \subseteq S$, we have $|X(R)| \le \delta$. Then the corollary follows from Lemma \ref{lem:sampling}. Let $R'$ be the smallest subset of $R$ such that $\varphi(R') = \varphi(R)$. We first claim that $V(R') = V(R)$. It is clear that $V(R) \subseteq V(R')$ by monotonicity (see Definition \ref{def:lptype}). For the other inclusion, consider $x \in V(R')$. If $x \not \in V(R)$, we have $\varphi(R \cup \{x\}) = \varphi(R) = \varphi(R')$ so by locality, we have $\varphi(R') = \varphi(R' \cup \{x\})$ which contradicts the fact that $x \in V(R')$. Therefore, our claim holds true.

We now claim that $R'$ is a basis as defined in Definition \ref{def:basis}. Suppose for the sake of contradiction that $R'$ is not a basis. Then there exists a $F \subset R'$ such that $\varphi(F) = \varphi(R')$. We now claim that $V(R') = V(F)$. It is clear that $V(R') \subseteq V(F)$. To show the other inclusion, let $x \in V(F)$. Then if $x$ was not a violator of $R'$, then $\varphi(R' \cup \{x\}) = \varphi(R') = \varphi(F)$ which would imply that $\varphi(F \cup \{x\}) = \varphi(F)$ by the locality property in Definition \ref{def:lptype} which is false by definition. Hence, $V(F) = V(R') = V(R)$ which contradicts the minimality of $R'$. Therefore, $R'$ is a basis. 

Finally, we claim that if $x \in X(R)$ then $x \in R'$. This must be true because otherwise, we have $R' \subseteq R \setminus \{x \} \subseteq R$ which results in a contradiction by monotonicity. Finally, since $X(R) \subseteq R'$ and $R'$ is a basis, it follows that $|X(R)| \le \delta$, as desired.
\end{proof}
\begin{remark}
Corollary \ref{cor:viol_bound} holds for a larger class of problems than LP-Type problems called \emph{violator spaces} (\cite{violator1, violator2}. However, we omitted this extra layer of abstraction since there are no additional natural property testing consequences from considering violator spaces over LP-Type problems.
\end{remark}

\begin{proof}[Proof of Theorem \ref{thm:main}]
We first prove the soundness case. Consider the set $R$ that was randomly sampled in step $2$ of \textsc{LP-Type Tester}. Assume that $\varphi(R) \le k$ since this can only decrease the probability that our algorithm outputs \reject. Now we claim that there must be at least $\epsilon |S|$ choices of $x$ in step $6$ of \textsc{LP-Type Tester} that results in $\varphi(R \cup \{x\}) > \varphi(R)$ (so that we correctly output \reject). To show this, note that if $\varphi(R) = \varphi(R \cup \{x\}) = \varphi(R \cup \{y\})$ for $x \ne y$ then by locality, it follows that $\varphi(R) = \varphi(R \cup \{x,y\})$. Therefore, if there are less than $\epsilon |S|$ choices of $x$ in step $6$ of \textsc{LP-Type Tester} such that $\varphi(R \cup \{x\}) > \varphi(R)$, then we would have $\varphi(R \cup R') = \varphi(R) \le k$ where $|R \cup R'| \ge (1-\epsilon)|S|$ which would contradict our assumption that at least $\epsilon|S|$ constraints need to be removed from $S$ for $\varphi(S) \le k$ to hold true. Therefore, the probability our algorithm does not output \reject in any of the $2/\epsilon$ rounds is at most
\begin{equation}\label{eq:completeness_bound}
     (1 - \epsilon)^{2/\epsilon} \le e^{-2} < \frac{1}{3} 
\end{equation}
which means that we output \reject with probability at least $2/3$, as desired.

We now prove Theorem \ref{thm:main} for the completeness case. Let $v_r = \E[|V(R)|]$. Since $r = |R| = 10\delta/\epsilon$, Corollary \ref{cor:viol_bound} gives us
$$ v_r \le \frac{\delta(|S|-r)}{r+1} \le \frac{\epsilon |S|}{10}.$$ Therefore in the completeness case, the probability that a randomly chosen $x$ satisfies $\varphi(R \cup \{x\}) \ne \varphi(R)$ is at most $\epsilon/10$. Since we choose $2/\epsilon$ random constraints, the probability we don't find such a $x$ is at least
\begin{equation}\label{eq:soundness_bound}
    (1 - \epsilon/10)^{2/\epsilon} \ge 1 - \frac{2}{10} > \frac{2}{3}.
\end{equation}
Therefore, \textsc{LP-Type Tester} outputs \accept with probability at least $2/3$, as desired.
\end{proof}

\section{Property Testing Applications of \textsc{LP-Type Tester}} \label{sec:applications}
We now give applications of the framework we build in Section \ref{sec:alg}. We first consider the problem of testing feasibility of a set of linear inequalities. As a direct consequence, we can test if a set of labeled points can be linearly sepearable (either by linear hyperplanes or by functions that have a finite basis). These two applications will not be an immediate corollary of Theorem \ref{thm:main} since there is no objective function that we want to optimize, but our results follow from Theorem \ref{thm:main} with some slight modificatons. 

We then consider direct applications of Algorithm \ref{alg:lptype} to some cannonical LP-Type problems such as the smallest enclosing ball. Theorem \ref{thm:main} gives direct upper bounds for property testing for these problems.

\subsection{Testing Feasibility of a System of Linear Equations}\label{sec:feasibility_ub}

We first begin by considering testing feasibility of a set of linear inequalities. Recall that in this problem, we have $n$ linear constraints in $\mathbb{R}^d$ (such as $x_1 + \cdots + x_d \le 1$) and we want to distinguish the following two cases with probability at least $2/3$:

\begin{itemize}
	\item The system of linear inequalities can all be mutually satisfied, i.e., the system is feasible (Completeness Case)
	\item At least $\epsilon |S|$ many of the constraints need to be removed (or flipped) for the system to be feasible (Soundness Case).
\end{itemize}
 This is not exactly a LP-Type problem since we do not have an optimization function $\varphi$. (Note that if $\varphi$ was an indicator function for a subset of constraints being feasible then $\varphi$ would break the locality condition in Definition \ref{def:lptype}). However, we perform a slight modification of Algorithm \ref{alg:lptype} to create a new algorithm for this problem.

Our algorithm for this testing problem, \textsc{Linear Feasibility Tester}, uses the fact that if we pick any arbitrary $x \in \mathbb{R}^d$, then $x$ will violate `many' of the linear constraints in $S$ in the completeness case. In the soundness case, we use ideas from \textsc{LP-Type Tester} and show that if we introduce an arbitrary linear optimization function (thus turning our problem into an instance of linear programming), then a solution that optimizes a small subset of the constraints will not violate `too many' of the other constraints. We present our algorithm below along with Theorem \ref{thm:feasibility} that proves its correctness.

\begin{algorithm}[!ht]
	\SetKwInOut{Input}{Input}
	\SetKwInOut{Output}{Output}
	\Input{$d, \epsilon$, query access to constraints of $S$}
	\Output{Accept or Reject}
	\DontPrintSemicolon
	$r \gets \lceil 10d/\epsilon \rceil$ \; 
	$R \gets$ random sample of size $r$ of constraints from $S$ \;
	Create the linear program $L$: $\max{x_1}$ subject to the constraints in $R$ \;
	$x \gets$ solution of $L$ \;
	\If{$L$ is not feasible }{
	Reject and abort
	}
	\For{$2/\epsilon$ rounds}{
	$y \gets$ uniformly random constraint of $S$ \;
	\If{$x$ does not satisfy $y$}{
	Output \reject and abort.
	}
	}
	Output \accept.
	\caption{\textsc{Linear Feasibility Tester}}
	\label{alg:feasibility}
\end{algorithm}
\begin{theorem}[Correctness of \textsc{Linear Feasibility Tester}]\label{thm:feasibility}
Given a set $S$ of linear inequalities in $\mathbb{R}^d$, 
the following statements hold with probability at least $2/3$:
\begin{itemize}
    \item \textbf{Completeness case:} \textsc{Linear Feasibility Tester} outputs \accept if there exists $x \in \mathbb{R}^d $ that satisfies all of the constraints in $S$.
    \item \textbf{Soundness Case:} \textsc{Linear Feasibility Tester} outputs 
    \reject if at least $\epsilon |S|$ constraints need to be removed from $S$ for $S$ to be feasible.
    \end{itemize}
\end{theorem}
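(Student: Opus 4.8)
The plan is to mirror the proof of Theorem~\ref{thm:main}: handle soundness by a short direct argument, and handle completeness through the violator bound of Corollary~\ref{cor:viol_bound}. The first step is to recast linear programming as an LP-Type problem so that Corollary~\ref{cor:viol_bound} is available. I would fix the objective $\max x_1$ and, after a standard symbolic (lexicographic) perturbation of the right-hand sides and the assumption --- which is without loss of generality by adding a large bounding box, a routine technicality --- that every feasible subprogram is bounded, obtain that for each feasible $T \subseteq S$ the program $\max x_1$ subject to $T$ has a \emph{unique} optimal vertex $\mathrm{opt}(T)$. Setting $\varphi(T) := \mathrm{opt}(T)$, with infeasible sets assigned a formal top value and feasible sets ordered by the (perturbed) objective, gives an LP-Type problem $(S,\varphi)$ of dimension $d$; this is the classical fact, already noted in the preliminaries, that linear programming in $d$ variables is LP-Type of dimension $d$ \cite{lptype1,lptype2}. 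The crucial geometric observation I would then record is: for feasible $R$ and any $y \in S \setminus R$, the point $x = \mathrm{opt}(R)$ computed in step~4 \emph{fails to satisfy the constraint $y$ if and only if $y \in V(R)$}. Indeed, if $x$ satisfies $y$ then $x$ is still optimal for $R \cup \{y\}$, so $\varphi(R\cup\{y\}) = \varphi(R)$; conversely, if $x$ violates $y$ then $\mathrm{opt}(R \cup \{y\}) \neq x$ by uniqueness, hence $\varphi(R \cup \{y\}) \neq \varphi(R)$.

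For the soundness case, suppose at least $\epsilon|S|$ constraints must be removed from $S$ to make it feasible. If the sampled $R$ is infeasible we reject in step~5, so assume $R$ is feasible and let $x = \mathrm{opt}(R)$. Then $x$ satisfies every constraint of $R$, so $F := \{y \in S : x \text{ satisfies } y\}$ contains $R$; if $x$ violated fewer than $\epsilon|S|$ constraints of $S$ then $F$ would be a feasible subsystem obtained by deleting fewer than $\epsilon|S|$ constraints, contradicting the hypothesis. Hence a uniformly random $y \in S$ is violated by $x$ with probability at least $\epsilon$, so the probability that none of the $2/\epsilon$ sampled constraints is violated is at most $(1-\epsilon)^{2/\epsilon} \le e^{-2} < 1/3$, and \textsc{Linear Feasibility Tester} rejects with probability at least $2/3$.

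For the completeness case, suppose $S$ is feasible; then every $R \subseteq S$ is feasible, so step~5 never triggers and step~4 produces $x = \mathrm{opt}(R)$ (here the boundedness assumption is used). By the equivalence above, a given round of step~6 rejects only if the sampled $y$ lies in $V(R)$ (in particular $y \notin R$). Since $r = |R| \ge 10d/\epsilon$, Corollary~\ref{cor:viol_bound} applied to $(S,\varphi)$ gives $v_r = \E[|V(R)|] \le \frac{d(|S|-r)}{r+1} \le \frac{\epsilon|S|}{10}$, so for uniformly random $y \in S$ we have $\Pr[y \in V(R)] = \E[|V(R)|]/|S| \le \epsilon/10$. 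A union bound over the $2/\epsilon$ rounds shows the algorithm rejects with probability at most $\tfrac{2}{\epsilon}\cdot\tfrac{\epsilon}{10} = \tfrac15 < \tfrac13$, hence accepts with probability at least $2/3$.

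The main obstacle is the first paragraph: making the reduction to an LP-Type problem fully rigorous --- in particular, guaranteeing uniqueness of $\mathrm{opt}(R)$ so that ``$x$ does not satisfy $y$'' coincides \emph{exactly} with ``$y \in V(R)$'', and disposing cleanly of unbounded or infeasible subprograms. Once that correspondence is established, the soundness and completeness analyses are essentially verbatim copies of those in Theorem~\ref{thm:main}, with the test ``$\varphi(R\cup\{x\}) \ne \varphi(R)$'' replaced by the test ``$x$ does not satisfy $y$''.
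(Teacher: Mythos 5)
Your proof is correct and follows essentially the same approach as the paper: recast the feasibility question as the LP-Type problem $(S,\varphi)$ with $\varphi = \max x_1$ (via the usual perturbation to guarantee non-degeneracy), invoke Corollary~\ref{cor:viol_bound} for completeness, and argue soundness by noting that a fixed $x$ must violate at least $\epsilon|S|$ constraints. Your version is a bit more explicit than the paper's --- you spell out the correspondence between ``$x$ violates $y$'' and ``$y\in V(R)$'' and the set-theoretic argument in the soundness case --- but the decomposition, the key lemma, and the calculations all coincide with the paper's proof.
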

\begin{remark}
Note that the query complexity of Algorithm \ref{alg:lptype} is $O(d/\epsilon)$ which is independent of $|S|$, the number of constraints. 
\end{remark}
\begin{proof}
 The proof of the soundness case follows similarly to Theorem \ref{thm:main} using the fact that for any $x$, there are at least $\epsilon |S|$ constraints in $x$ such that $x$ violates these constraints. Then the probability that \textsc{Linear Feasibility Tester} outputs \reject in this case can be calculated to be at least $2/3$ using the same bound as Eq.\@ \eqref{eq:completeness_bound} in the proof of Theorem \ref{thm:main}.
 
 For the completeness case, we note that if we introduce the optimization function $\varphi(S) = \max x_1$ subject to the constraints in $S$, then $(S, \varphi)$ is an LP-Type problem of dimension $d$ (assuming that the constraints are non degenerate which can be assumed by perturbing the constraints and then taking the limit of the perturbation to $0$. For more details, see \cite{Clarkson:1995:LVA:201019.201036, Seidel1991}). Now let $x$ be the solution to the linear program that we solved in Step $4$ of \textsc{Linear Feasibility Tester}. Using Corollary \ref{cor:viol_bound}, we know that if $|R| = 10\lceil d/\epsilon \rceil$, then the number of constraints $v_r$ in $S$ that satisfy $\varphi(R \cup \{y\}) \ne \varphi(R)$ is at most 
 $$ v_r \le \frac{d|S|}{r} \le \frac{\epsilon |S|}{10}$$
  in expectation. Knowing that $x$ not satisfying $y$ implies that $y$ is a violator of $R$, the probability that $x$ does not satisfy a randomly chosen $y$ is at most $\epsilon/10$. Thus, using the exact calculation as in Eq.\eqref{eq:soundness_bound} of Theorem \ref{thm:main}, we have that \textsc{Linear Feasibility Tester} outputs \accept in the completeness case with probability at least $2/3$, as desired.
\end{proof}

\subsubsection{Testing if Labeled Points can be Linearly Separated}
As a direct consequence of the Theorem \ref{thm:feasibility}, we can test if a set of points in $d$ dimensions labeled $\{+1, -1\}$ can be linearly separated. More formally, we have the following corollary.
\begin{corollary}\label{cor:linear}
Given a set $S$ of points in $\mathbb{R}^d$ with labels in $\{+1, -1\}$, the following statements hold with probability at least $2/3$:
\begin{itemize}
    \item \textbf{Completeness case:} \textsc{Linear Feasibility Tester} outputs \accept if there exists a hyperplane that separates the two sets of labeled points.
    \item \textbf{Soundness Case:} \textsc{Linear Feasibility Tester} outputs 
    \reject if at least $\epsilon |S|$ points need to be removed (or relabeled) for $S$ to be linearly sepearable. 
    \end{itemize}
\end{corollary}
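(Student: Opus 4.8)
The plan is to reduce linear separability of labeled points directly to feasibility of a system of linear inequalities and then invoke Theorem~\ref{thm:feasibility}. Given a point $p \in \mathbb{R}^d$ with label $\ell_p \in \{+1,-1\}$, introduce $d+1$ real variables $(w,b) \in \mathbb{R}^d \times \mathbb{R}$ and associate to $p$ the single linear inequality $\ell_p(\langle w, p\rangle + b) \ge 1$. I would first invoke the standard fact that, for a finite point set, the two label classes are separated by an affine hyperplane (with no point lying on it) if and only if there exist $(w,b)$ satisfying all of these inequalities simultaneously: one direction is immediate, and for the other, any strictly separating hyperplane can be rescaled so that the minimum margin is at least $1$. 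Hence the point set $S$ is linearly separable precisely when the associated system $S'$ of $|S|$ linear inequalities in $d+1$ variables is feasible.

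Next I would check that the two notions of distance agree. Removing a point from $S$ corresponds exactly to removing its inequality from $S'$, and relabeling a point $p$ (flipping $\ell_p$) replaces $\ell_p(\langle w,p\rangle+b)\ge 1$ by $-\ell_p(\langle w,p\rangle+b)\ge 1$, i.e. it ``flips'' the constraint in the sense used by \textsc{Linear Feasibility Tester}. Since $|S'| = |S|$, it follows that $S$ is $\epsilon$-far from linearly separable (allowing removals and relabelings) if and only if $S'$ is $\epsilon$-far from feasible (allowing removals and flips). Moreover a uniformly random constraint of $S'$ is obtained simply by querying a uniformly random point of $S$ together with its label, so the query model carries over with no overhead.

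Finally, running \textsc{Linear Feasibility Tester} on $S'$ with parameter $\epsilon$ and ambient dimension $d+1$, and reporting its output, yields the corollary by Theorem~\ref{thm:feasibility}; the query complexity is $O((d+1)/\epsilon) = O(d/\epsilon)$. There is essentially no substantive obstacle here: the only points requiring care are the strict-versus-nonstrict separation issue, handled by the margin-$1$ normalization above, and the bookkeeping that the two $\epsilon$-far notions coincide under the reduction, and both are routine.
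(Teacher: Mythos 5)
Your proof takes essentially the same approach as the paper's: reduce linear separability to feasibility of a system of linear inequalities (one inequality per labeled point, normalized to margin at least $1$) and invoke Theorem~\ref{thm:feasibility}. You are actually a bit more careful than the paper's one-line proof, which writes the constraints as $p^T x \ge 1$ / $p^T x \le -1$ without a bias term and thus implicitly restricts to hyperplanes through the origin; your inclusion of $b$ (giving $d+1$ variables, still $O(d/\epsilon)$ queries) and your explicit check that removals/relabelings in $S$ correspond to removals/flips in $S'$ are the right way to make the reduction fully precise.
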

\begin{proof}
 The proof follows directly from the fact that we can write a linear inequality that represents a separating hyperplane. For example, if $p \in S$ is labeled $1$, we want to find $x$ such that $p^Tx \ge 1$ and if $p$ is labeled $-1$, we want to find $x$ such that $p^Tx \le -1$. 
\end{proof}

We consider generalizations of this problem where we wish to separate labelled points by arbitrary functions, rather than just linear hyperplanes, and where points can have multiple labels.
\paragraph{Separating labeled points using arbitrary functions:}
We can generalize our result by separating labeled points using arbitrary \emph{functions}: given a family of functions $\mathcal{F}$, we can ask if there is a $f \in \mathcal{F}$ such that $f(p) > 0$ for all points with a particular label and $f(p) < 0$ for all the points with the other label. 

We now translate this problem to a setting with linear inequalities. Our approach is standard in machine learning and is known as feature maps. If the family $\mathcal{F}$ has a finite basis $f_1, \cdots, f_k$, meaning that every $f \in \mathcal{F}$ is a linear combination of $f_1, \cdots, f_k$, then we can create a system of linear inequalities as follows. For each point $p \in S$, we can make a new constraint which is $(f_1(p), \cdots, f_k(p)) x \ge 1$ (note there that $x$ is a column vector of variables) if $p$ has one particular label or $\le -1$ if $p$ has another label. Then this system of linear constraints is feasible iff there are scalars $a_1 , \cdots, a_k$ such that $\sum_i a_i f_i(p) \ge 0$ for all $p$ with one label and $\sum_i a_i f_i(p) \le 0$ for all $p$ with the other label. Then our separating function is precisely $f = \sum_i a_i f_i$. Note that in this formulation, we have $k$ variables. Thus, the query complexity is $O(k/\epsilon)$.

As an example, we consider the case that $\mathcal{F}$ is the family of polynomials in $d$ variables with degree $\le t$. The basis of this family is all the possible terms of the form $x_1^{t_1} \cdots x_d^{t_d}$ where the $t_i$ are non-negative and add to at most $t$. By a standard balls and bins argument, the number of these terms is $ \binom{t+d}d$. For constant $t$, this is $O(d^t)$, which means that our system of linear constraints has $O(d^t)$ variables. Thus, the query complexity is $O(d^t/\epsilon)$.

\paragraph{Separating points with multiple labels}
Suppose that instead of assigning each point one of $2$ labels, we instead chose to assign it one of $\ell \ge 2$ labels. One common interpretation of separability for this setup is to check if each of the $\binom{\ell}2$ pairs of label sets are separable. We modify our notion of $\epsilon$-far to reflect this.

\begin{definition}
 $S$ is $\eps$-far from linearly separable if at least $\eps |S|$ many labels in $S$ have to be changed for $S$ to be separable.
\end{definition}

If such a data set is $\eps$-far from separable, then some subset with consisting of two labels must be $\eps/\binom{\ell}2$-far from separable. As such, we can consider an algorithm that runs Algorithm \textsc{Linear Feasibility Tester} on each pair of labels with $\eps' = \eps/\binom{\ell}2$ and outputs \accept if all these tests output \accept. We need to reduce the error probability for each pair such that the overall error probability of outputting the incorrect answer (acquired by a union bound) is still at most $1/3$. This can be done by using a stronger version of the original algorithm where we run it $108\log \ell = O(\log \ell)$ times and taking the majority answer. By a standard Chernoff bound argument, the probability this process gives the wrong answer is at most $e^{-(\frac{1}{2})^2\frac{1}{3}*108\log \ell / 3} \le 1/\ell^3$. Thus, we can distinguish separability in this case by running this stronger version over all pairs of distinct labels, resulting in $O(\ell^2 \log \ell)$ instances of \textsc{Linear Feasibility Tester}, using $\eps' = \eps/\binom{\ell}2$. So, the total query complexity will be $O(d\ell^4\log \ell/\epsilon)$.

Additionally, the completeness case has error at most $\binom{l}{2} * 1/\ell^3 = o(1)$ by a Union Bound argument. Clearly, the soundness case has error at most $1/\ell^3$, since there is one pair of distinct labels which is $\eps'$-far from separable.

\subsection{Upper Bounds for Canonical LP-Type Problems}
We now give direct applications of \textsc{LP-Type Tester} to some canonical LP-Type problems. The correctness of these applications follows directly from Theorem \ref{thm:main}. Our list is not exhaustive and we only consider some of the more well known LP-Type problems. In all of the following problems, Theorem \ref{thm:main} tells us that the following statements hold with probability at least $2/3$:
\begin{itemize}
    \item \textsc{LP-Type Tester}  outputs \accept if $\varphi(S) \le k$ (Completeness Case)
    \item  \textsc{LP-Type Tester} outputs 
    \reject if at least $\epsilon |S|$ constraints need to be removed from $S$ for $\varphi(S) \le k$ to hold (Soundness Case).
    \end{itemize}

Our results are the following:
\begin{itemize}
    \item \textit{Smallest enclosing ball}: In this problem, $\varphi(S)$ is the radius of the smallest enclosing ball of a set of points $S$ in $\mathbb{R}^d$. It is known that the dimension of this LP-Type problem is $d+1$ (see \cite{lptypesurvey}) so we can test if $\varphi(S) \le k$ with query complexity $O(d/\epsilon)$ queries.
    \item \emph{Smallest intersecting ball}: In this problem, $\varphi(S)$ is the smallest radius ball that intersects a set of closed convex bodies $S$ in $\mathbb{R}^d$. The dimension of this LP-Type problem is $O(d)$ (\cite{lptypesurvey}) so we can test if $\varphi(S) \le k$ with query complexity $O(d/\epsilon)$ queries.
    
    \item \emph{Smallest volume annulus}: In this problem, $\varphi(S)$ is the volume of the smallest annulus that contains a set of points $S$ in $\mathbb{R}^d$. Again, the dimension of this LP-Type problem is $O(d)$ ( \cite{lptypesurvey}) so we can test if $\varphi(S) \le k$ with query complexity $O(d/\epsilon)$.
\end{itemize}

\section{Lower Bounds}\label{sec:lowerbounds}
In this section, we give matching lower bounds for all the testing problems that we considered in Section \ref{sec:applications}.

\subsection{Lower Bound for Testing Feasibility of Linear Constraints}
Since linear separability is a special case of feasibility of linear constraints, we can lower bound the necessary query complexity of the latter by providing one for the former. In particular, we aim to show that $\Omega(d/\epsilon)$ queries are needed to determine if a set of points in $d$ dimensions is linearly separable. By the reduction of linear separability to feasibility of linear constraints, this implies that $\Omega(d/\epsilon)$ constraint queries are needed to test feasibility of a system of linear constraints, which matches our upper bound.

Our overall approach is to first introduce a set of $O(d)$ points in $\mathbb{R}^{d}$ that have the property that if we do not look at a large enough collection of these points, they can be separated by a hyperplane even with arbitrary labels. However, there will exist a labeling of all of the points such that `many' of the points will have to be removed or relabeled for this labeling to be separated. The existence of these points is given in Lemma \ref{lem:points} (and is inspired by the moment curve).

Then, repeating these points with carefully chosen multiplicities allows us to construct our set $S$ of points. Then a coupon collector argument gives us our desired lower bound on the query complexity. This argument is formalized in the proof of Theorem \ref{thm:linseplb}.

\begin{lemma}\label{lem:points}
There exists a set $S$ of $3d+1$ points in $\mathbb{R}^d$ that satisfy the following conditions:
\begin{enumerate}
    \item There exists a labeling of the points of $S$ such that at least $d$ points have to be relabeled for the points to be linearly separable.
    \item Any subset of points of $S$ of size $d+1$ with arbitrary labels in $\{-1, 1\}$ is linearly separable.
\end{enumerate}
\end{lemma}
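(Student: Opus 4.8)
The plan is to realize $S$ as $3d+1$ points on the moment curve $\gamma(t)=(t,t^2,\dots,t^d)\in\mathbb{R}^d$, say $p_i=\gamma(i)$ for $i=1,\dots,3d+1$. The one fact I will use repeatedly is the standard dictionary between affine hyperplanes and low‑degree polynomials: the point $\gamma(t)$ lies strictly on a given side of the hyperplane $\{y:\langle w,y\rangle=b\}$ according to the sign of $q(t):=w_1t+w_2t^2+\cdots+w_dt^d-b$, a polynomial of degree at most $d$ that is not identically zero whenever the hyperplane actually separates something. Consequently, if points $\gamma(t_1),\dots,\gamma(t_m)$ with $t_1<\cdots<t_m$ carry labels $\ell_1,\dots,\ell_m\in\{-1,+1\}$ and the labeling is linearly separable, then (reading off, via the Intermediate Value Theorem, a distinct root of $q$ in each interval $(t_j,t_{j+1})$ across which the labels differ) the number of indices $j$ with $\ell_j\neq\ell_{j+1}$ — the number of \emph{sign changes} of the label sequence — is at most $\deg q\le d$.

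For condition~2, take any $d+1$ of the points, order them by parameter, and give them an arbitrary labeling; its number of sign changes $m$ is at most $d$. I exhibit an explicit separator: for each sign change pick a point $s_j$ strictly between the two consecutive sample parameters where it occurs, set $p(t)=\pm\prod_j(t-s_j)$ with the global sign chosen to match the first label, and note that $p$ has degree $m\le d$, equals the prescribed sign at every sample parameter, and is nonzero there. Writing $p(t)=a_0+a_1t+\cdots+a_dt^d$, the affine function $y\mapsto a_1y_1+\cdots+a_dy_d+a_0$ strictly separates the two label classes; the only degenerate case $m=0$ is when all labels agree, where the points are trivially separable.

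For condition~1, label the points alternately, $\ell_i=(-1)^i$, so the label sequence has exactly $3d$ sign changes. By the dictionary above, any linearly separable labeling of these points has at most $d$ sign changes. It remains to invoke the elementary bookkeeping fact that flipping one label changes the number of sign changes by at most $2$ (it touches at most two consecutive pairs), so a labeling obtained by relabeling $k$ points still has at least $3d-2k$ sign changes; forcing this to be $\le d$ requires $k\ge d$. Hence at least $d$ points must be relabeled. (The same counting shows that at least $d$ points must be \emph{removed}, since deleting an interior point of an alternating sequence also drops the sign‑change count by exactly $2$; this is why the bound $3d+1=3d+1$ is chosen so that $3d-2k\le d\iff k\ge d$.)

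The routine parts are the polynomial/hyperplane correspondence and the "$\pm 2$ per flip" accounting; the only place needing a little care is the strictness of the separation in condition~2 — guaranteeing $p$ does not vanish at any sample parameter (handled by placing each $s_j$ strictly between samples) and disposing of the constant‑polynomial case. I do not expect a serious obstacle: the whole point is that the moment curve collapses linear separability in $\mathbb{R}^d$ into a one‑dimensional sign‑pattern problem controlled by the degree bound $d$, after which both conditions are short counting arguments.
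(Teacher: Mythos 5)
Your proposal is correct and follows the same overall strategy as the paper: place $3d+1$ points on the moment curve, translate affine hyperplanes into polynomials of degree at most $d$ via $q(t)=\langle w,\gamma(t)\rangle - b$, and for condition~1 use the alternating labeling together with the observation that each relabeling destroys at most two sign changes, so $3d-2k\le d$ forces $k\ge d$. Where you diverge from the paper is the proof of condition~2. The paper argues by enumeration: for each $t\le d$ and each choice of $t$ breakpoints among the $d$ gap positions it produces a degree-$d$ polynomial with a prescribed alternation pattern, counts $2\sum_{t=0}^d\binom{d}{t}=2^{d+1}$ resulting sign patterns, and concludes (implicitly via a bijection) that every one of the $2^{d+1}$ labelings is realized. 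You instead give a direct construction: read off the at most $d$ sign-change positions of the given labeling, place one root $s_j$ strictly between the two sample parameters at each such position, and take $p(t)=\pm\prod_j(t-s_j)$, which has degree at most $d$ and the correct sign at every sample. Your version is a bit cleaner in that it exhibits the separating hyperplane explicitly, makes the strictness of the separation evident (no $s_j$ coincides with a sample), and disposes of the constant-label case without special pleading; the paper's counting argument buys nothing extra here. Your parenthetical about removals (an interior deletion drops the sign-change count by $2$, an endpoint by $1$) is a correct bonus observation that the lemma statement itself does not require.
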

\begin{proof}
We construct our set $S$ as follows. Let $x_i$ be the point $$(i^1, \cdots, i^d) \in \mathbb{R}^d$$ for $1\le i \le 3d+1$ (note that this set of points is referred to as the moment curve). We prove the first claim using a standard relationship between the moment curve and polynomials. Assign the point $x_i$ to the label $(-1)^i$. 
Let $k$ be the number of relabeled points such that $S$ is linearly separable. Then there exists $w \in \mathbb{R}^d$ and $w_0 \in \mathbb{R}$ such that $\text{Sign}({x_i^Tw+w_0})$ matches the label of every point $x_i \in S$. In other words, there exists a polynomial 
\begin{equation} \label{eq:polyform}
    P(x) = \sum_{j=0}^d c_jx^j 
\end{equation}
such that $\text{Sign}({P(i)})$ matches the label of $x_i$. Now note that if there are two consecutive indices $i$ and $i+1$ that have different labels, then $P$ must have a root in the interval $(i,i+1)$. Originally, there are $3d$ such alternating intervals. Now note that the relabeling of any point can decrease the total number of such alternating intervals by at most $2$. Hence after $k$ relabelings, there must be at least $3d-2k$ alternating intervals. However, since $P$ is a $d$ degree polynomial, it must have at most $d$ roots which means $3d-2k \le d$ and therefore, $k \ge d$, as desired. 

We now prove the second claim. Let $x_{a_1}, \cdots, x_{a_{d+1}}$ be a subset of $d+1$ points of $S$. Without loss of generality, suppose that $a_1 < \cdots < a_{d+1}$. We now show that for every labelings of these $d+1$ points, there exists a polynomial of degree $d$ such that the sign of $P(a_i)$ matches the label of $x_{a_i}$. Towards this goal, pick $t$ elements $b_1, \cdots, b_t$ of the set $\{a_2, \cdots, a_{d+1} \}$ where $t \le d$. Consider the $t+1$ intervals
$$[a_1, b_1), [b_1, b_2), \cdots, [b_{t-1}, b_t), [b_t, a_{d+1}+1).$$
We can then find a polynomial of degree $d$ such that 
\begin{itemize}
    \item the sign of $P$ is constant on $I \cap \{a_1, a_2, \cdots, a_{d+1} \}$ where $I$ is any of the $t+1$ intervals above,
    \item the sign of $P$ alternates between consecutive intervals.
\end{itemize}
This is possible since we are only specifying the value of $P$ on $d+1$ locations. Now the total number of labelings described by all possible choices of $P$ is given by 
$$2 \sum_{t = 0}^d \binom{d}t = 2^{d+1} $$
where the factor of $2$ comes from specifying the sign of $P$ on the first interval. Note that $2^{d+1}$ is exactly the total number of different ways to label $d+1$ points, which proves the second claim.
\end{proof}
With Lemma \ref{lem:points} on hand, we can prove our desired lower bound on the query complexity.

\begin{theorem}\label{thm:linseplb}
Any algorithm that tests if a set $S$ of labeled points in $d$ dimensions can be linearly separated requires $\Omega(d/\epsilon)$ queries.
\end{theorem}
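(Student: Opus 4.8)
The plan is to combine Lemma~\ref{lem:points} with a blow-up-by-multiplicities construction and a coupon-collector (occupancy) estimate, packaged through Yao's principle. Lemma~\ref{lem:points} hands us $3d+1$ moment-curve points together with two properties that play complementary roles: (P1) the alternating labeling is $d$-far from separable \emph{as a point set}, i.e.\ every separable relabeling must change at least $d$ of the points; and (P2) any $d+1$ of these points with arbitrary $\pm1$ labels are linearly separable. Property (P1) will certify that our hard instance is genuinely $\epsilon$-far, and property (P2) will certify indistinguishability: a tester that has only ``seen'' $d+1$ of the gadget points cannot decide which case it is in.

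First I would build the hard instances on $n$ points, $n$ large. Take $x_1,\dots,x_{3d+1}$ from Lemma~\ref{lem:points}; give each of $x_2,\dots,x_{3d+1}$ multiplicity $m:=\lceil \epsilon n/(3d)\rceil$, and let $x_1$ absorb the remaining $n-3dm=\Theta(n)$ copies. The $3d$ ``rare'' copies form an $\Theta(\epsilon)$-fraction gadget, the $x_1$-pile is essentially padding, and crucially the padding is itself a moment-curve point so it never interferes with separability. In the NO instance the gadget carries a suitably randomized version of the alternating labeling of Lemma~\ref{lem:points}; by (P1) any separable relabeling must change at least $d$ distinct \emph{types}, each of the cheapest of which costs $m$, so at least $dm=\Omega(\epsilon n)$ labels must change, making the instance $\Omega(\epsilon)$-far. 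In the YES instance one keeps a uniformly random $(d+1)$-subset of the types at their alternating labels and, invoking (P2), relabels every other type to agree with a degree-$\le d$ polynomial separating the kept $d+1$ points; the result is a fully separable instance on the same point multiset. Randomizing which $(d+1)$-subset is kept is essential, since any \emph{fixed} choice disagrees with the NO instance on a constant fraction of the gadget and is exposed by $O(1/\epsilon)$ queries.

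The core estimate is then coupon-collector-flavored. A uniform query hits any fixed rare type with probability $m/n=\Theta(\epsilon/d)$, so after $q$ queries the expected number of \emph{distinct} rare types observed is $O(\epsilon q)$; by Markov, taking $q=c\,d/\epsilon$ for a small constant $c$, the tester observes at most $d+1$ distinct rare types with probability close to $1$ (and adaptivity does not help, since in the random-sample model each further query is still uniform). By (P2), a transcript touching only $d+1$ of the gadget points, with those particular labels, is consistent with a separable instance, so on this high-probability event the tester's view carries no information distinguishing $\mathcal{D}_{\mathrm{no}}$ from $\mathcal{D}_{\mathrm{yes}}$. Hence no algorithm making $o(d/\epsilon)$ queries can be correct with probability $2/3$ on both, and $\Omega(d/\epsilon)$ queries are necessary; since linear separability reduces to feasibility of linear constraints (Corollary~\ref{cor:linear}), this also lower-bounds that problem.

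I expect the indistinguishability step to be the real obstacle: making precise that ``the tester has seen only $\le d+1$ gadget points'' truly leaves it unable to distinguish yes from no. The honest way to do this is to couple $\mathcal{D}_{\mathrm{no}}$ and $\mathcal{D}_{\mathrm{yes}}$ so that they agree on the padding and on the kept $(d+1)$ types, and to argue — using (P2) and the randomization of both the ``honest'' subset and the alternating pattern — that the induced distributions on the tester's observed labels are statistically close as long as at most $d+1$ distinct gadget types have been queried. Getting this coupling right is where the work lies; by contrast, verifying $\Omega(\epsilon)$-farness from (P1), the occupancy computation, and the final Yao-principle assembly are routine.
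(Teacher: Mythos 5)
Your outline follows the paper's proof closely: Lemma~\ref{lem:points} supplies the moment-curve gadget, the multiplicities create one heavy padding point plus $\Theta(\epsilon)$-mass rare types, and an occupancy estimate (you use Markov on the expected number of distinct types seen, the paper computes the coupon-collector expectation) shows that $o(d/\epsilon)$ queries rarely reveal $d+1$ distinct rare types. One genuine refinement on your side is keeping the YES and NO instances on the \emph{same point multiset} and varying only the labels, and invoking Yao's principle explicitly. The paper's families differ structurally --- $\mathcal{F}_1$ has $d$ rare types each of mass $\epsilon/d$ while $\mathcal{F}_2$ has $3d$ of mass $\epsilon/(3d)$ --- an asymmetry the paper's informal argument does not rule out a tester exploiting (e.g.\ via collision statistics, which detect support-size differences with far fewer samples than coupon collection). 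So your instinct there is sound.

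That said, the gap you flag at the end is genuine and more concrete than ``getting the coupling right,'' and your construction as stated does not yet close it. By part (1) of Lemma~\ref{lem:points}, any separating polynomial $P$ for a $(d+1)$-subset must disagree with the alternating labeling on at least $d$ of the $3d+1$ types; so in $\mathcal{D}_{\mathrm{yes}}$ a constant fraction of the rare types carry a label different from the one carried in $\mathcal{D}_{\mathrm{no}}$. Consequently even a \emph{single} rare sample has constant total-variation distance between the two cases --- the issue is not how many distinct rare types are observed but what their labels say. Your Markov/occupancy bound controls the former, not the latter. To finish the argument one has to show that, conditioned on seeing at most $d$ distinct rare types, the \emph{joint} distribution of the observed labels is statistically close in the two cases, and this is precisely where the randomness of $T$ (and any randomization of the alternating pattern) must be exploited; nothing you have written establishes it. The paper's own write-up also skips this step (it simply asserts that with $\le d$ rare types the input ``looks separable,'' which at best gives a one-sided argument), so your proposal is at least as careful as the original --- but neither, as written, proves the two-sided $\Omega(d/\epsilon)$ bound.
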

\begin{proof}
Let $|S| = n$. We create two families of $n$ points in $\mathbb{R}^d$ with a specific labeling such that any $S$ from one family can be linearly separated while any $S$ from the other family is $\epsilon$-far from being linearly separable. 
First, consider the set of $3d+1$ points supplied by Lemma \ref{lem:points} and the labeling from part $1$ of the lemma. The first family $\mathcal{F}_1$ consists of picking a subset of $d+1$ of these points (with the labeling above), repeating $d$ of these points $n\epsilon/d$ times, and repeating the remaining point $(1-\epsilon)n$ times. The second family $\mathcal{F}_2$ (again with the same labeling) consists of picking all of the $3d+1$ points from Lemma \ref{lem:points}, repeating some $3d$ of these points with multiplicity $n\epsilon/(3d)$, and repeating the last point with multiplicity $(1-\epsilon)n$. 

By Lemma \ref{lem:points}, we know that if $S$ is from $\mathcal{F}_1$ then $S$ is linearly separable while if $S$ is from $\mathcal{F}_2$, then $S$ is at least $\epsilon/(3d) \cdot d = O(\epsilon)$-far from separable. Any algorithm that queries points randomly must discover at least $d+1$ \emph{unique} points out of the points that were repeated $n\epsilon/d$ time from any $S$ in $\mathcal{F}_2$
to discover that this $S$ is $O(\epsilon)$-far from separable (otherwise, the points look separable). Call points that are identical `groups'. Now given a random point from $S$, the probability of hitting any one group is $\epsilon/(3d)$. Therefore by coupon collector, the expected number of queries required to hit at least $d+1$ of these $3d$ groups is at least
$$ \frac{1}{\epsilon} \left( \frac{3d}{3d} + \frac{3d}{3d-1} + \cdots + \frac{3d}{3d-d} \right) = \frac{3d}{\epsilon}( H_{3d} - H_{2d-1} ) = \Theta \left( \frac{d}{\epsilon} \right), $$
as desired.
\end{proof}
As a corollary, we have the following lower bound as well. This is due to the reduction from linear separability to linear program feasibility from the proof of Corollary \ref{cor:linear}.

\begin{theorem}
Any algorithm that tests if $n$ linear inequalities in $d$ dimensions are feasible requires $\Omega(d/\epsilon)$ queries. 
\end{theorem}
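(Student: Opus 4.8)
The plan is to derive this as an immediate corollary of Theorem~\ref{thm:linseplb}, using the same reduction from linear separability to feasibility of linear inequalities that appears in the proof of Corollary~\ref{cor:linear}. The starting point is a set of $n$ labeled points in $\mathbb{R}^d$, to which one associates the system of $n$ inequalities given by $p^T x \ge 1$ for each point $p$ with label $+1$ and $p^T x \le -1$ for each point $p$ with label $-1$; by the argument in Corollary~\ref{cor:linear}, this system is feasible if and only if the original point set is linearly separable.

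The main thing I would check is that the reduction preserves every quantity relevant to the lower bound. There is a one-to-one correspondence between points and constraints, so $|S|$ is unchanged; the constraints live in $\mathbb{R}^d$ (introducing an explicit offset variable, if desired, only changes $d$ to $d+1$, which does not affect the $\Omega(d/\epsilon)$ order); and a single query to a constraint is answered by a single query to the corresponding point, so the query complexities of the two testing problems coincide up to the reduction. Finally, removing a point corresponds to removing its constraint, and relabeling a point corresponds to flipping the direction of its inequality, so a point set is $\epsilon$-far from linearly separable exactly when the associated system is $\epsilon$-far from feasible in the removal-or-flip sense used in Section~\ref{sec:feasibility_ub} and Theorem~\ref{thm:feasibility}.

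Putting these together, any tester for feasibility of $n$ linear inequalities in $d$ dimensions that used $o(d/\epsilon)$ queries would, through this reduction, yield a tester for linear separability of $n$ labeled points in $\mathbb{R}^d$ with the same query count, contradicting Theorem~\ref{thm:linseplb}. I do not expect a genuine obstacle here: the only delicate point is lining up the two notions of $\epsilon$-far across the reduction (removal versus flipping of a constraint), and this is immediate from the definitions, so the proof should amount to a short paragraph invoking Theorem~\ref{thm:linseplb}.
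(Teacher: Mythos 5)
Your proposal matches the paper's approach exactly: the paper also derives this as a corollary of Theorem~\ref{thm:linseplb} via the reduction to linear separability from Corollary~\ref{cor:linear}, and you correctly verify that the reduction preserves the number of constraints, the dimension up to an additive constant, and the notion of $\epsilon$-far. The paper states this in one sentence, while you spell out the bookkeeping, but the argument is the same.
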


We now give matching query complexity lower bounds for the LP-Type problems that we considered in Section \ref{sec:applications}. 
\subsection{Lower bound for Testing Smallest Enclosing Ball}
We first give a lower bound for property testing the radius of the smallest enclosing ball of a set of points. Our approach is to first construct a set of points in $\mathbb{R}^j$, for any $j$, whose smallest enclosing ball can be calculated exactly. This set of points will have the property that a small enough subset of the points will have a significantly smaller enclosing ball. Therefore, if an algorithm does not query enough points, it will incorrectly believe that this set of points can be covered by a ball of small radius. Our construction for this case will be a regular simplex and explained below.

\begin{lemma}\label{lem:radius}
The radius of the circumcircle of a unit simplex in $\mathbb{R}^j$ is $\sqrt{j}/({ \sqrt{2(j+1)}})$.
\end{lemma}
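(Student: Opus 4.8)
The plan is to use the symmetry of the regular (unit-edge) simplex to compute the circumradius directly from the edge-length constraint. Write $v_0,\ldots,v_j$ for the $j+1$ vertices of the unit simplex in $\mathbb{R}^j$, and translate so that the centroid $g=\frac{1}{j+1}\sum_{i=0}^j v_i$ is at the origin, so that $\sum_{i=0}^j v_i = 0$. The first step is to observe that $g$ is the circumcenter: since the symmetry group of the regular simplex acts transitively on the vertices and fixes $g$, the distances $\|v_i - g\|$ are all equal, so with $g$ at the origin we have $\|v_i\| = R$ for every $i$, where $R$ is by definition the circumradius.

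Next I would extract the pairwise inner products from the unit-edge condition. For $i\neq\ell$ we have $1 = \|v_i-v_\ell\|^2 = \|v_i\|^2 + \|v_\ell\|^2 - 2\langle v_i,v_\ell\rangle = 2R^2 - 2\langle v_i,v_\ell\rangle$, hence $\langle v_i,v_\ell\rangle = R^2 - \tfrac12$ for every pair. Then expanding $0 = \bigl\|\sum_{i=0}^j v_i\bigr\|^2 = \sum_i \|v_i\|^2 + \sum_{i\neq \ell}\langle v_i,v_\ell\rangle = (j+1)R^2 + (j+1)j\bigl(R^2-\tfrac12\bigr)$ gives $(j+1)R^2 - \tfrac{j}{2} = 0$ after dividing by $j+1$ and simplifying, so $R^2 = \frac{j}{2(j+1)}$ and $R = \sqrt{j}/\sqrt{2(j+1)}$, as claimed.

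There is essentially no serious obstacle here; the only point to be careful about is the identification of the circumcenter with the centroid, which is what lets one put $\|v_i\|=R$ after centering. An alternative that sidesteps any symmetry-group discussion is to use the explicit model of the regular simplex as the convex hull of the standard basis vectors $e_1,\ldots,e_{j+1}\in\mathbb{R}^{j+1}$: this set lies in the hyperplane $\{x:\sum_i x_i = 1\}$ (an isometric copy of $\mathbb{R}^j$), has edge length $\sqrt2$ since $\|e_i-e_\ell\| = \sqrt2$, and has centroid $(\tfrac1{j+1},\ldots,\tfrac1{j+1})$ at distance $\sqrt{(1-\tfrac1{j+1})^2 + j\cdot(\tfrac1{j+1})^2} = \sqrt{j/(j+1)}$ from each $e_i$; rescaling by $1/\sqrt2$ to normalize the edge length to $1$ yields circumradius $\sqrt{j/(j+1)}\big/\sqrt2 = \sqrt{j}/\sqrt{2(j+1)}$. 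Either route is a short computation, so I would present whichever is cleaner in context — likely the centering argument above.
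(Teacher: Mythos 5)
Your proof is correct, and your primary argument is a genuinely different route from the paper's. You work coordinate-free: translate so the centroid is at the origin, use the symmetry group of the regular simplex to argue the centroid is the circumcenter (so $\|v_i\|=R$), extract $\langle v_i,v_\ell\rangle = R^2 - \tfrac12$ from the unit-edge condition, and solve $0 = \bigl\|\sum_i v_i\bigr\|^2 = (j+1)R^2 + (j+1)j(R^2-\tfrac12)$ for $R$. The paper instead uses the explicit embedding of the regular $j$-simplex as $\{e_1,\ldots,e_{j+1}\}\subset\mathbb{R}^{j+1}$ with edge length $\sqrt2$, reads off the centroid $(\tfrac1{j+1},\ldots,\tfrac1{j+1})$, computes the distance to a vertex as $\sqrt{j/(j+1)}$, and rescales by $1/\sqrt2$ — which is exactly the ``alternative'' you sketch in your closing paragraph. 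The tradeoff is what you already identify: your Gram-matrix argument avoids choosing coordinates but leans on the (standard, easily justified) identification of circumcenter with centroid, whereas the paper's embedding makes everything a concrete arithmetic check with no appeal to symmetry. Both are valid and essentially equally short.
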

\begin{proof}
 Note that we can embed a regular $j$-simplex in $\mathbb{R}^{j+1}$ using the coordinates $\{e_i\}_{i=1}^{j+1}$ where $e_i$ is the all zero vector with a single $1$ in the $i$th coordinate. This simplex has edge length $\sqrt{2}$ so we can scale appropriately to find the circumcircle of a unit simplex. Now the centroid of this simplex is easily seen to be located at $(1/(j+1), \cdots, 1/(j+1))$ which means that the circumcircle has radius
 $$ \sqrt{ \left(1 - \frac{1}{j+1} \right)^2 + \frac{j}{(j+1)^2}} = \sqrt{\frac{j}{j+1}}.$$
 Now scaling by $1/\sqrt{2}$ gives us the desired value.
\end{proof}

\begin{theorem}\label{thm:ballslb}
Any algorithm that tests if a set of $n$ points in $\mathbb{R}^d$ can be enclosed by a ball of radius $k$, where $k$ is given, requires $\Omega(d/\epsilon)$ queries.
\end{theorem}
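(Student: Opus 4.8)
The plan is to mimic the proof of Theorem \ref{thm:linseplb}, replacing the moment‑curve construction of Lemma \ref{lem:points} by a regular simplex and invoking Lemma \ref{lem:radius}. Fix a regular $d$-simplex of unit edge length in $\mathbb{R}^d$ with vertices $v_0,v_1,\dots,v_d$. For any $m$ of these vertices, the sub‑simplex they span is a unit regular $(m-1)$-simplex, so by Lemma \ref{lem:radius} the smallest enclosing ball of those $m$ vertices has radius exactly $\sqrt{(m-1)/(2m)}$, a quantity strictly increasing in $m$. I would set the threshold radius to $k=\sqrt{(\lfloor d/2\rfloor-1)/(2\lfloor d/2\rfloor)}$ (shrinking it by an infinitesimal if a boundary case is a concern), so that every set of at most $\lfloor d/2\rfloor$ of the vertices is enclosed by a ball of radius $k$, while every set of at least $\lfloor d/2\rfloor+1$ of them is not.

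Next I would build two families of $n$-point instances, exactly parallel to the proof of Theorem \ref{thm:linseplb}. The ``close'' family $\mathcal{F}_1$ consists of instances obtained by taking the heavy vertex $v_0$ with multiplicity $(1-\epsilon)n$ together with some $\lfloor d/2\rfloor-1$ of the remaining vertices, each with multiplicity $\epsilon n/(\lfloor d/2\rfloor-1)$; such an instance uses only $\lfloor d/2\rfloor$ distinct vertices and hence is enclosed by a ball of radius $k$. The ``far'' family $\mathcal{F}_2$ consists of instances taking $v_0$ with multiplicity $(1-\epsilon)n$ together with all of $v_1,\dots,v_d$, each with multiplicity $\epsilon n/d$. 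To see that an $\mathcal{F}_2$ instance is $\Omega(\epsilon)$-far, observe that to make its point set fit in a radius-$k$ ball one must delete all copies of at least $d+1-\lfloor d/2\rfloor$ distinct vertices; deleting $v_0$ costs $(1-\epsilon)n$ points and is never worthwhile, so one deletes at least $\lfloor d/2\rfloor$ of the light vertices, at a cost of at least $\lfloor d/2\rfloor\cdot\epsilon n/d\ge \epsilon n/3$ points (and this can be made exactly $\epsilon n$ by rescaling $\epsilon$).

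The last step is the query lower bound, via a coupon‑collector argument identical in spirit to the one ending the proof of Theorem \ref{thm:linseplb}. Any correct tester, run on an $\mathcal{F}_2$ instance, must observe a subset of its sampled points whose smallest enclosing ball has radius exceeding $k$, since otherwise the sampled points are consistent with an $\mathcal{F}_1$ instance and the tester has no basis to reject; by the choice of $k$ this forces it to sample at least $\lfloor d/2\rfloor+1$ distinct vertices, hence at least $\lfloor d/2\rfloor$ distinct ``light'' groups. A single query lands in a fixed light group with probability $\epsilon/d$, so the expected number of queries needed to hit $\lfloor d/2\rfloor$ of the $d$ light groups is
\[
\frac{d}{\epsilon}\bigl(H_d-H_{\lceil d/2\rceil}\bigr)=\Theta\!\left(\frac{d}{\epsilon}\right),
\]
and since this collection time is a sum of $\Theta(d)$ independent geometric variables it concentrates, so with $o(d/\epsilon)$ queries the tester fails to see enough distinct vertices with probability $1-o(1)$ and therefore accepts the $\mathcal{F}_2$ instance with probability bounded above $1/3$, contradicting soundness. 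This yields the claimed $\Omega(d/\epsilon)$ bound.

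The step I expect to be the main obstacle is the indistinguishability claim, i.e.\ making rigorous that a tester which has queried at most $\lfloor d/2\rfloor$ distinct vertices genuinely cannot separate an $\mathcal{F}_2$ instance from an $\mathcal{F}_1$ instance. As in Theorem \ref{thm:linseplb}, the natural route is Yao's principle: randomize over which $\lfloor d/2\rfloor-1$ vertices the $\mathcal{F}_1$ instance uses (and over relabelings of the $\mathcal{F}_2$ instance), and argue that conditioned on the event that at most $\lfloor d/2\rfloor$ distinct vertices have been seen, the two induced distributions on the query transcript are (nearly) the same, so no deterministic $o(d/\epsilon)$-query algorithm has constant advantage. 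A subtlety to be careful about is that $\mathcal{F}_1$ and $\mathcal{F}_2$ instances have different numbers of distinct light points and hence slightly different collision statistics among repeated light points; this is precisely why it matters that all light groups have (essentially) the same frequency $\Theta(\epsilon/d)$ and that the randomization over the vertex choice prevents the algorithm from exploiting the identity of any individual observed vertex.
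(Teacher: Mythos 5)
Your proposal follows essentially the same approach as the paper: both constructions use the vertices of a regular simplex, exploit the fact that the circumradius of a regular sub-simplex grows with the number of vertices (via Lemma \ref{lem:radius}), set up two families with one heavy vertex and many light vertices of equal small weight, and conclude with a coupon-collector bound on the number of queries needed to discover enough distinct light vertices. The only real differences are parameter choices (you keep the construction in $\mathbb{R}^d$ with a threshold at $\lfloor d/2\rfloor$ vertices, while the paper uses a simplex in $\mathbb{R}^{3d+1}$ with a threshold at $d+1$) and that you spell out more carefully, via Yao's principle, the indistinguishability step that the paper treats somewhat informally by pointing back to Theorem \ref{thm:linseplb}.
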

\begin{proof}
Let $k$ be fixed. We construct two families of points in $\mathbb{R}^{O(d)}$ such that any $S$ from one family can be enclosed by a ball of radius $k$ while any $S$ from the second family is $\epsilon$-far from being enclosed by a ball of radius $k$.
Before constructing these families, we first pick $\ell$ such that the regular simplex of side length $\ell$ in $\mathbb{R}^{d+1}$ has circumradius $k$. 

Now to create the first family $\mathcal{F}_1$, we first pick any $d+1$ points of the regular simplex with side length $\ell$ in $\mathbb{R}^{3d+1}$. Then we repeat one of these points with multiplicity $(1-\epsilon)n$ and we repeat the other $d$ points with multiplicity $n\epsilon/d$ each. To create the second family $\mathcal{F}_2$, we pick a point of the regular simplex with side length $\ell$ in $\mathbb{R}^{3d+1}$, repeat it with multiplicity $(1-\epsilon)n$, and repeat the other $3d$ points with multiplicity $n \epsilon/(3d)$. Finally, let $S$ be a set of $n$ points from $\mathcal{F}_2$. We can check that the circumradius of a regular unit simplex is an increasing function of the dimension and that any subset of the vertices of a regular simplex is a regular simplex itself. Therefore, the smallest radius of the points in $S$ is much larger than $k$ and $S$ is $O(\epsilon)$-far from being encloseable by a ball of radius $k$. However, similar to the argument in Theorem \ref{thm:linseplb}, any algorithm that rejects $S$ must have discovered at least $d+1$ distinct `groups' of repeated points. By the same coupon collector argument as in the proof of Theorem \ref{thm:linseplb}, we have that this task takes at least $\Omega(d/\epsilon)$ queries in expectation.
\end{proof}

As a simple application of Theorem \ref{thm:ballslb}, we get the following lower bounds as well.

\begin{corollary}
Any algorithm for testing the smallest intersecting ball for $n$ convex bodies in $\mathbb{R}^d$ requires $\Omega(d/\epsilon)$ queries.
\end{corollary}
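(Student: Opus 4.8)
The plan is to derive the corollary directly from Theorem~\ref{thm:ballslb} by exhibiting a reduction from the smallest enclosing ball problem for points to the smallest intersecting ball problem for convex bodies. The key observation is that a point is itself a (degenerate) closed convex body, and a ball of radius $k$ intersects the singleton set $\{p\}$ if and only if it contains the point $p$. Hence an instance of the smallest enclosing ball problem on a set $S$ of $n$ points in $\mathbb{R}^d$ is literally an instance of the smallest intersecting ball problem on the collection of $n$ singletons $\{\{p\} : p \in S\}$, with the same radius parameter $k$ and the same notion of $\epsilon$-farness (removing a body from the collection corresponds exactly to removing the corresponding point).

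First I would make this reduction precise: given the two families $\mathcal{F}_1, \mathcal{F}_2$ of point sets constructed in the proof of Theorem~\ref{thm:ballslb}, I would form the corresponding families $\mathcal{F}_1', \mathcal{F}_2'$ of collections of convex bodies, where each point is replaced by the singleton convex body consisting of that point (living in $\mathbb{R}^{O(d)}$, exactly as before). Any collection drawn from $\mathcal{F}_1'$ can be intersected by a ball of radius $k$ (since the underlying points lie in such a ball), while any collection drawn from $\mathcal{F}_2'$ is $\Omega(\epsilon)$-far from being intersectable by a ball of radius $k$ (since removing fewer than $\Omega(\epsilon n)$ bodies leaves a subset of simplex vertices whose circumradius still exceeds $k$, by the monotonicity-in-dimension fact used in Theorem~\ref{thm:ballslb}). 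A query to a convex body in this instance returns exactly the information of a query to the corresponding point, so any tester for smallest intersecting ball yields a tester for smallest enclosing ball with the same query complexity.

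Then I would invoke Theorem~\ref{thm:ballslb}, which asserts that distinguishing $\mathcal{F}_1$ from $\mathcal{F}_2$ requires $\Omega(d/\epsilon)$ queries; by the reduction, distinguishing $\mathcal{F}_1'$ from $\mathcal{F}_2'$ also requires $\Omega(d/\epsilon)$ queries, which is the claimed lower bound. Since the convex bodies live in $\mathbb{R}^{O(d)}$, rescaling the dimension parameter by a constant does not affect the asymptotics.

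I do not expect a genuine obstacle here; the only point requiring a line of care is checking that the LP-Type problem ``smallest intersecting ball'' is set up so that the $\epsilon$-far notion for a collection of bodies (removing $\epsilon$-fraction of bodies) coincides, under the singleton reduction, with the $\epsilon$-far notion for points (removing $\epsilon$-fraction of points) — this is immediate from the definitions, but worth stating explicitly so the reader sees the reduction is faithful. One might also remark that the same reduction shows smallest intersecting ball is at least as hard as smallest enclosing ball for the upper bound side as well, which is consistent with the $O(d/\epsilon)$ bound already recorded in Section~\ref{sec:applications}.
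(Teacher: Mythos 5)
Your proposal is correct and matches the paper's proof exactly: both reduce smallest enclosing ball to smallest intersecting ball by replacing each point with the singleton convex body containing it, then invoke the lower bound of Theorem~\ref{thm:ballslb}. Your write-up spells out the reduction in more detail (explicitly constructing $\mathcal{F}_1', \mathcal{F}_2'$ and checking the $\epsilon$-far notions coincide), but the underlying argument is identical.
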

\begin{proof}
The proof follows from the fact that a set of singleton points is also a set of convex bodies. In this case, the smallest intersecting ball is equivalent to the smallest ball that encloses these points. Therefore, the same lower bound as in Theorem \ref{thm:ballslb} holds. 
\end{proof}

\appendix
\section*{Appendices}

\section{Tolerant Tester for Testing Feasibility of Linear Constraints}
We generalize our argument in Section \ref{sec:feasibility_ub} by giving a \emph{tolerant} tester for testing feasibility of a system of linear constraints. In the tolerant version, we output \accept if there only `few' constraints need to be removed for a set of linear inequalities to be feasible. More formally, we wish to distinguish the following two cases with probability at least $2/3$:

\begin{itemize}
	\item At most $c\epsilon |S|$ many inequalities in $S$ need to be removed for $S$ (or flipped) for $S$ to be feasible, i.e., $S$ is $c\epsilon$-close to being feasible for some fixed $c < 1$ (Completeness Case).
	\item At least $\epsilon |S|$ many of the constraints need to be removed (or flipped) for the system to be feasible (Soundness Case).
\end{itemize}

Our approach is a slightly modified version of \textsc{Linear Feasibility Tester}, Algorithm \ref{alg:feasibility}, that we presented in Section \ref{sec:feasibility_ub}. The challenge here is the completeness case where we must accept if we only have a `few' bad constraints. To accomplish this, we carefully select a solution to a small linear program that we run. For more details, see Algorithm \ref{alg:lp_tolerant}. Our main theorem in this section, Theorem \ref{thm:tolerant} shows that we can perform tolerant testing using the same query complexity we used for the one-sided tester in Section \ref{sec:feasibility_ub}, namely $O(d/\epsilon)$. However, as we will explain below, the \emph{running time} of Algorithm \ref{alg:lp_tolerant}, \textsc{Tolerant Linear Feasibility Tester}, is exponential in the running time of Algorithm \ref{alg:feasibility}. Our algorithm, \textsc{Tolerant Linear Feasibility Tester}, is presented below.

\begin{algorithm}[!ht]
	\SetKwInOut{Input}{Input}
	\SetKwInOut{Output}{Output}
	\Input{$d, \epsilon$, query access to constraints of LP}
	\Output{Accept or Reject}
	\DontPrintSemicolon
	$r \gets \lceil 10d/\epsilon \rceil$ \; 
	$R \gets$ random sample of size $r$ of constraints from $S$ \;
	$x \gets$ solution of the largest subset $R'$ of $R$ such that the linear program $L$: $\max{x_1}$ subject to the constraints in $R'$  is feasible\;
	\If{No $L$ is not feasible }{
	Reject and abort
	}
	\For{$2/\epsilon$ rounds}{
	$y \gets$ uniformly random constraint of $S$ \;
	\If{$x$ does not satisfy $y$}{
	Output \reject and abort.
	}
	}
	Output \accept.
	\caption{\textsc{Tolerant Linear Feasibility Tester}}
	\label{alg:lp_tolerant}
\end{algorithm}
Unlike \textsc{Linear Feasibility Tester} where we run a linear program, we solve a slightly different program given in step $3$ of \textsc{Tolerant Linear Feasibility Tester}. The step determines the largest feasible subset of these constraints. Note that this step is clearly exponential in the number of constraints (which is $O(d/\epsilon)$). Therefore, the overall \emph{runtime} of \textsc{Tolerant Linear Feasibility Tester} will be exponential in the runtime of \textsc{Linear Feasibility Tester}. The correctness of \textsc{Tolerant Linear Feasibility Tester} is proven in Theorem \ref{thm:tolerant}.

\begin{theorem}[Correctness of \textsc{Tolerant Linear Feasibility Tester}]\label{thm:tolerant}
Given a set $S$ of linear inequalities in $\mathbb{R}^d$, there exists a constant $c < 1$ such that the following statements hold with probability at least $2/3$:
\begin{itemize}
    \item \textbf{Completeness case:} \textsc{Tolerant Linear Feasibility Tester} outputs \accept if there exists $x \in \mathbb{R}^d $ that satisfies $(1-c\epsilon)|S|$ of the constraints in $S$.
    \item \textbf{Soundness Case:} \textsc{Tolerant Linear Feasibility Tester} outputs 
    \reject if at least $\epsilon |S|$ constraints need to be removed from $S$ for $S$ to be feasible.
    \end{itemize}
\end{theorem}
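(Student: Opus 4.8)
\textbf{Proof proposal for Theorem \ref{thm:tolerant}.}

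The plan is to mirror the structure of the proof of Theorem \ref{thm:feasibility}, handling soundness exactly as before and concentrating the real work on the completeness case. For soundness, suppose $S$ is $\epsilon$-far from feasible. Then for \emph{any} point $x \in \mathbb{R}^d$ — in particular the $x$ produced in step $3$ — at least $\epsilon|S|$ of the constraints of $S$ are violated by $x$, since otherwise keeping only the satisfied constraints would exhibit a feasible subsystem of size greater than $(1-\epsilon)|S|$, contradicting $\epsilon$-farness. Hence each of the $2/\epsilon$ independent random constraints $y$ is violated with probability at least $\epsilon$, and the probability we fail to reject is at most $(1-\epsilon)^{2/\epsilon} \le e^{-2} < 1/3$, exactly as in Eq.~\eqref{eq:completeness_bound}.

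For the completeness case, assume there exists $x^\star \in \mathbb{R}^d$ satisfying all but $c\epsilon|S|$ of the constraints; call the violated set $B$, so $|B| \le c\epsilon |S|$. First I would bound the chance that the sampled set $R$ contains too many constraints of $B$: since $R$ is a uniform sample of size $r = \lceil 10d/\epsilon\rceil$, we have $\E[|R \cap B|] \le c\epsilon \cdot r \approx 10cd$, and by Markov's inequality $|R \cap B| \le 20cd$ with probability at least $1/2$ (the constants here are adjustable). On this event, $R \setminus B$ is a feasible subsystem of $R$ of size at least $r - 20cd$, so the largest feasible subset $R'$ found in step $3$ has $|R'| \ge r - 20cd$, i.e. $R'$ omits at most $20cd$ constraints of $R$. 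Now I would apply Corollary \ref{cor:viol_bound} to the LP-Type problem $(S'', \varphi)$ where $S''$ is the subsystem of $S$ obtained by deleting the $\le c\epsilon|S|$ ``bad'' constraints together with the constraints of $R$ not in $R'$ — or, more cleanly, argue directly: the point $x$ is the optimum of a linear program on $R'$, a uniformly-random-enough subset, so by the sampling lemma the expected number of constraints in the relevant restricted ground set that $x$ violates is $O(d|S|/r) = O(\epsilon|S|/10)$; adding back the at most $c\epsilon|S|$ genuinely bad constraints, the total number of constraints of $S$ that $x$ violates is at most $(c + 1/10)\epsilon|S| + o(\epsilon|S|)$. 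Choosing $c$ a small enough absolute constant (say $c = 1/20$) makes the probability that a single random $y$ is violated by $x$ at most, say, $\epsilon/6$, whence over $2/\epsilon$ rounds we accept with probability at least $(1 - \epsilon/6)^{2/\epsilon} \ge 1 - 1/3 = 2/3$. A final union bound over the Markov event and the acceptance event, with constants tuned so each failure probability is below $1/6$, gives overall success probability at least $2/3$.

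The main obstacle is making the middle step rigorous: $R'$ is \emph{not} a uniformly random subset of a fixed ground set — it is data-dependent, being the largest feasible subset of the random sample $R$ — so Corollary \ref{cor:viol_bound} does not apply to it verbatim. The clean fix I would pursue is to condition on $R \cap B$ and work with the ground set $S \setminus B$ (on which the problem \emph{is} feasible and LP-Type): within $S \setminus B$, the set $R \setminus B$ is a uniform random sample of its (random) size $r' \ge r - O(cd)$, the LP optimum over $R \setminus B$ is well-defined, and Corollary \ref{cor:viol_bound} bounds the expected number of violators in $S \setminus B$ by $\delta(|S\setminus B|)/(r'+1) = O(\epsilon|S|/10)$. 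It then remains to check that the step-$3$ solution $x$ (optimum of the largest feasible subset of all of $R$) agrees with, or is no worse than, the optimum over $R \setminus B$ in terms of violator count on $S \setminus B$ — which holds because $R \setminus B$ is itself a feasible subset of $R$, so $|R'| \ge |R \setminus B|$ and the step-$3$ linear program is over a \emph{superset} of constraints, hence its solution violates no more constraints of $S \setminus B$ than the $R \setminus B$-optimum does (monotonicity of the violator set under adding constraints). Once this is pinned down, the constant $c$ is chosen at the end to absorb all the $O(\cdot)$ slack, and the rest is the routine probability bookkeeping sketched above.
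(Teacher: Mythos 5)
Your proof mirrors the paper's almost exactly: soundness is handled identically (any $x$ violates at least $\epsilon|S|$ constraints, so the $2/\epsilon$ sampling rounds catch one), and for completeness you use Markov's inequality to control $|R\cap B|$, then invoke the sampling lemma on the feasible remainder. So the overall architecture matches. You go further than the paper in one respect: you explicitly flag that $R'$ is a data-dependent subset of $R$, not a uniformly random sample of a fixed ground set, so Corollary~\ref{cor:viol_bound} does not apply to it verbatim. This is a real subtlety, and the paper's own proof glosses over it (it simply writes ``By Corollary~\ref{cor:viol_bound}, the probability that $x$ violates any other constraint in the good set is at most~$\ldots$'' without justifying why the corollary applies to the non-random $R'$).

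However, the fix you propose is not correct. You argue that $|R'|\geq |R\setminus B|$ implies the step-$3$ linear program is run over a \emph{superset} of $R\setminus B$, so that its optimum dominates the $R\setminus B$-optimum via monotonicity of the violator set. But $|R'|\geq |R\setminus B|$ does not imply $R'\supseteq R\setminus B$: the largest feasible subset of $R$ is generally not unique, and it can include bad constraints while dropping good ones that happen to conflict with them. Concretely, take $d=1$ with good constraints $\{x\geq 0,\ x\leq 2,\ x\geq 1.5\}$ and a bad constraint $x\leq 1$. If $R$ contains all four, both $\{x\geq 0,\ x\leq 2,\ x\geq 1.5\}$ and $\{x\geq 0,\ x\leq 2,\ x\leq 1\}$ are maximum feasible subsets of size~$3$, and step~$3$ may return the latter, whose LP optimum can then violate many good constraints of the form $x\geq 1.2,\ x\geq 1.3,\ldots$ lying in $S\setminus B$. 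So $R'$ is not a superset of $R\setminus B$, and the ``monotonicity of violators under adding constraints'' step does not go through. To close the gap you would instead need to bound the violators of the (data-dependent) set $T=R'\cap(S\setminus B)$; since $T\subseteq R_{\mathrm{good}}$, monotonicity actually gives $V(T)\supseteq V(R_{\mathrm{good}})$, which is the \emph{wrong} direction, so some additional argument (e.g., a union bound over the $\binom{r}{O(c\epsilon r)}$ ways of deleting few elements from $R_{\mathrm{good}}$, at the cost of boosting $r$) is required. As written, both your proof and the paper's leave this step informal, but your stated justification for it is affirmatively wrong.
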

\begin{remark}
Note that the query complexity of Algorithm \ref{alg:lp_tolerant} is $O(d/\epsilon)$ which is independent of $|S|$, the number of constraints.
\end{remark}
\begin{proof}

Note that the proof of the soundness case is identical to the proof of the soundness case in Theorem \ref{thm:feasibility} since for any $x$ we find in step $3$ of \textsc{Tolerant Linear Feasibility Tester}, there exists at least $\epsilon |S|$ choices of $y$ in step $7$ such that $x$ does not satisfy the constraint $y$. Then a similar calculation as in Eq.\@ \eqref{eq:completeness_bound} implies that we reject with probability at least $2/3$. 

We now focus on the completeness case where we know there is a subset of $(1-c \epsilon)|S|$ constraints that are feasible. We call this the \emph{good} set, and the rest, the \emph{bad} set. Consider the sample $R$ from step $2$ of \textsc{Tolerant Linear Feasibility Tester}. The expected number of constraints from the good set in $R$ is $(1-c\epsilon)r$. This means at most $c\epsilon r$ constraints in $R$ come from the bad set in expectation. Hence with probability at least $9/10$, we know that the number of constraints from the bad set is at most $10c\epsilon r$ by Markov's inequality, which means the number of constraints coming from the good set is at least $(1-10c\epsilon)r$. We condition on this event. Now note that one valid subset $R'$ to use in step $3$ of \textsc{Tolerant Linear Feasibility Tester} is to take all the constraints coming from the good set only. This results in $|R'| \ge (1-10c\epsilon)$. Since we are maximizing $|R'|$, this means that at most $10c\epsilon r$ of the constraints coming from the good set that are in $R$ will not be included in $R'$. Thus, $x$ satisfies at least $(1-20c\epsilon)r$ constraints in the good set with probability at least $9/10$. Now we proceed similarly as the proof of Theorem \ref{thm:feasibility}. By Corollary \ref{cor:viol_bound}, the probability that $x$ violates any other constraint in the good set is at most
$$ \frac{d(n'-r+1)}{n'(r-d)} \le \frac{dn'}{10dn'/\epsilon} = \frac{\epsilon}{10}$$ 
where $n'$ is the size of the good set. Furthermore, $x$ can possibly violate any constraint in the bad set which means that the probability $x$ violates any other constraint is at most $\epsilon/10 + c \epsilon < \epsilon/6$ for sufficiently small $c$. Then, the probability that we find such a constraint in $2/\epsilon$ rounds is at most
$$1 - \left(1 - \frac{\epsilon}{6}\right)^{2/\epsilon} \le 1 - \left(1 - \frac{1}{3} \right) = \frac{1}{3}.$$ Therefore, we accept with probability at least $2/3$, as desired. Note that we can take any $c < 1/15$ in the statement of the Theorem for instance. 
\end{proof}

\paragraph{Acknowledgements}
We would like to thank Ronitt Rubinfeld, Piotr Indyk, Bertie Ancona, and Rikhav Shah for helpful feedback.
\bibliographystyle{alpha}
\bibliography{Paper}

\end{document}